\documentclass[twocolumn,showpacs,preprintnumbers,amsmath,amssymb,superscriptaddress]{revtex4-1}

\usepackage{graphicx,color,amssymb,amsthm,hyperref}
\usepackage{dcolumn}
\usepackage{bm}
\usepackage{bbm}

\newcommand{\bra}[1]{\langle #1|}
\newcommand{\ket}[1]{|#1\rangle}
\newcommand{\braket}[2]{\langle #1|#2\rangle}

\usepackage{multirow,array,rotating}
\newcolumntype{M}{>{$\vcenter\bgroup\hbox\bgroup}c<{\egroup\egroup$}}
\newcommand{\ve}{\boldsymbol}
\newcommand{\abs}[1]{\left\lvert#1\right\rvert}
\newcommand{\elip}{\mathcal{E}}
\newcommand{\A}{\mathcal{A}}
\newcommand{\nrm}[1]{\left\lVert{#1}\right\rVert}
\newcommand{\can}{\widetilde}
\usepackage[T1]{fontenc}



\def\>{\rangle}
\def\<{\langle}
\def\E{ {\cal E} }

\def\A{ {\cal A} }

\def\I{ \mathbbm{1} }

\DeclareMathOperator{\rank}{rank}

\DeclareMathOperator{\tr}{tr}
\DeclareMathOperator{\diag}{diag}

\def\rank{ \mbox{rank} }
\def\diag{ \mbox{diag} }
\def\range{ \mbox{range} }

\newtheorem{lem}{Lemma}

\begin{document}


\title{Quantum Steering Ellipsoids}

\author{Sania Jevtic}
\affiliation{Mathematical Sciences, John Crank 501, Brunel University, Uxbridge UB8 3PH, United Kingdom}
 \affiliation{Controlled Quantum Dynamics Theory, Department of Physics, Imperial College London, London SW7 2AZ, United Kingdom}
\author{Matthew Pusey}
 \affiliation{Perimeter Institute for Theoretical Physics, 31 Caroline Street North, Waterloo, ON N2L 2Y5, Canada}
 \affiliation{Controlled Quantum Dynamics Theory, Department of Physics, Imperial College London, London SW7 2AZ, United Kingdom}
\author{David Jennings}
\author{Terry Rudolph}
 \affiliation{Controlled Quantum Dynamics Theory, Department of Physics, Imperial College London, London SW7 2AZ, United Kingdom}

\date{\today}

\begin{abstract}
The quantum steering ellipsoid of a two-qubit state is the set of Bloch vectors that Bob can collapse Alice's qubit to, considering all possible measurements on his qubit. We provide an elementary construction of the ellipsoid for arbitrary states, calculate its volume, and explain how this geometric representation can be made faithful. The representation provides a range of new results, and uncovers new features, such as the existence of ``incomplete steering'' in separable states. We show that entanglement can be analysed in terms of three geometric features of the ellipsoid, and prove that a state is separable if and only if it obeys a ``nested tetrahedron'' condition.
\end{abstract}

\pacs{03.65.Ta,  03.67.Mn}
\maketitle

The Bloch sphere provides a simple representation for the state space of the most primitive quantum unit - the qubit - resulting in geometric intuitions that are invaluable in countless fundamental information-processing scenarios.
The two-qubit system likewise constitutes the primitive unit for bipartite quantum correlations. However, the two-qubit state space is 15-dimensional, with a surprising amount of structure and complexity. As such, it is challenging both to faithfully represent its states and to acquire natural intuitions for their properties  \cite{Vis2Qubits, Tstates, Geometry}.

The phenomenon of steering was first uncovered by Schr\"{o}dinger \cite{Schrodinger} (and subsequently rediscovered by others \cite{Gisin, HJW, Rob-Terry}), who realised that local measurements on Bob's side of the pure state $\ket{\psi}_{AB}$ could be used to ``steer'' Alice's state into \emph{any} convex decompositions of her reduced state $\rho_A$. Hence we say that for $\ket{\psi}_{AB}$, steering is \textit{complete} within Alice's Bloch sphere. For a two-qubit mixed state $\rho$ it is known \cite{FrankPhD} that the convex set of states that Alice can be steered to is an ellipsoid $\E_A$, see Figure \ref{canonical}.

The purpose of this Letter is to show that this steering ellipsoid is the natural generalization of the Bloch sphere picture, in that it can be used to give a faithful representation of an arbitrary two-qubit state in three dimensions, and moreover that the core properties of the state and its correlations are made manifest in simple geometric terms.

 By adopting this representation we are led to a range of novel results for both separable and entangled states.

Firstly, it reveals a new feature of separable quantum states, called incomplete steering, where not all decompositions of $\rho_A$ within the steering ellipsoid $\E_A$ are accessible. More importantly, the representation reveals surprising structure in mixed state entanglement. We find that mixed state entanglement decomposes into the simple geometric components of (a) the spatial orientation of the ellipsoid, (b) its distance from the origin and (c) its size. We are also lead to the surprising \textit{nested tetrahedron condition}: a state is separable if and only if its ellipsoid fits inside a tetrahedron that itself fits inside the Bloch sphere.

The representation also provides unity and insight for a range of distinct features. The nested tetrahedron condition leads to a simple determination of the minimal number of product states in the ensemble of any separable state. We note that the ellipsoid volume is an entanglement criterion, and provide a formula for it in terms of $\det(\rho)$ and $\det(\rho^{T_B})$. Non-zero ellipsoid volume is a type of correlation intermediate between discord and entanglement.

\begin{figure}[t]
\includegraphics[width=\columnwidth]{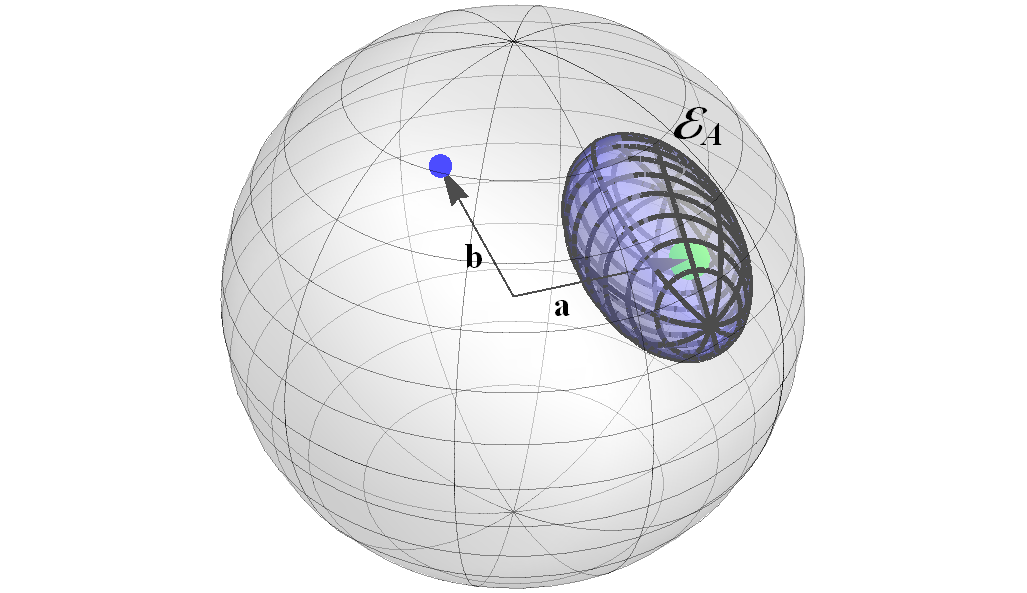}
\caption{\textbf{Ellipsoid representation of a two-qubit state.} For any two-qubit state $\rho$, the set of states to which Bob can steer Alice forms an ellipsoid $\E_A$ in Alice's Bloch sphere, containing her Bloch vector $\ve{a}$. Bob's Bloch vector $\ve b$ is also shown.} 
\label{canonical}
\end{figure}

Beyond these new  insights, we also feel that this method of compactly depicting any two-qubit state in three dimensions should be of interest to a range of researchers in both the theoretical and experimental quantum sciences.

\textit{The Pauli basis.} Let ${\sigma_\mu}= \{\I, \sigma_x,\sigma_y,\sigma_z\}$, $\mu=0, 1,2,3$ denote the ``homogeneous Pauli basis". Any single-qubit Hermitian operator $\hat E$ can be written $\hat E = \frac12\sum_{\mu=0}^3 X_\mu \sigma_\mu$, where the $X_\mu = \tr(\hat E\sigma_\mu)$ are components of the real vector $X$. Demanding that $\hat{E}\geq 0$ is equivalent to $X_0 \geq 0$ and $X_0^2 \geq \sum_{i=1}^3 X_i^2 $, and we can identify $\hat{E}$ as a POVM element. 

In a similar way, any two-qubit state $\rho$ can be written in the Pauli basis as $\rho=\frac{1}{4}\sum_{\mu ,\nu =0}^{3}\Theta _{\mu \nu }\sigma _{\mu
}\otimes \sigma _{\nu }$, where $\Theta_{\mu\nu} = \tr(\rho \sigma_\mu \otimes \sigma_\nu)$ is real for all $\mu, \nu$. As a block matrix we have $\Theta=
\begin{pmatrix}
    1 & \boldsymbol{b}^T \\
    \boldsymbol{a} & T
\end{pmatrix}$, where $\boldsymbol{a},\boldsymbol{b}$ are the Bloch vectors of the reduced states $\rho_A$ and $\rho_B$ of $\rho$ respectively, and $T$ is a 3 $\times$ 3 matrix encoding the correlations \cite{Tstates}. If Bob does a POVM and obtains outcome $\hat E$, he steers Alice to the state proportional to $\tr_B(\rho (\I \otimes \hat E))$, which in the Pauli basis is given by the 4-vector $\frac12 \Theta X$, with probability $\frac12 (1+\ve b \cdot  \ve{x} )$ where $\ve x = (X_1,X_2,X_3)^T$.

The $4$-vector formalism is related to the idea of stochastic local operations and classical communication (SLOCC) \cite{Vis2Qubits}, which are operations of the form $\rho \rightarrow \rho' = S_A\otimes S_B \rho (S_A \otimes S_B)^\dag$, where $S_A, S_B$ are invertible complex matrices. The set of states attainable from $\rho$ under SLOCC is called the SLOCC orbit of $\rho$, and denoted $\mathcal{S}(\rho)$. Under this action the matrix $\Theta$ transforms as $\Theta' = \Lambda_A \Theta \Lambda_B^T$ where $\Lambda_{A(B)}$ are proper orthochronous Lorentz transformations (Appendix \ref{sec_SteerEll_Deriv}). Significant in what follows, for a SLOCC operation affecting only Bob ($\Theta' = \Theta\Lambda_B$) the set of states Alice is steered to is unaffected, since: $X$ is in the forward light cone if and only if $X' = \Lambda_B X$ is, and $\Theta' X = \Theta X'$.

Previously, in \cite{FrankPhD} a range of SLOCC techniques were employed to study entanglement and steering for two-qubit mixed states, however this approach encounters problems when applied to certain separable states and moreover is not suited to addressing the geometric features of interest. The techniques developed here follow a different line, and circumvent both of these issues.

\textit{Construction of the quantum steering ellipsoid.}
We now provide an alternative construction of the steering ellipsoid $\E_A$ to that in \cite{FrankFilter}, which applies even when $\E_A$ is degenerate.

Our construction of $\E_A$ is easiest to understand in the case when the state $\rho$ has $\boldsymbol{b}=\boldsymbol{0}$. For such a state, suppose Bob projects his qubit onto the pure state $X = \begin{pmatrix}1 \\ \boldsymbol{x}\end{pmatrix}$ with $x = 1$. Given this outcome Alice is steered to
\begin{equation}
  Y = \Theta X = \begin{pmatrix}1 & \boldsymbol{0}^T \\ \boldsymbol{a}&T\end{pmatrix}\begin{pmatrix}1 \\ \boldsymbol{x}\end{pmatrix} =  \begin{pmatrix} 1 \\ \boldsymbol{a} + T\boldsymbol{x}\end{pmatrix},
\end{equation}
which occurs with probability $\frac 12$ and where Alice's Bloch vector is now $\boldsymbol{a} + T\boldsymbol{x}$. The set of all states Alice can end up with is simply the unit sphere of possible $\boldsymbol{x}$, shrunk and rotated by $T$ and translated by $\boldsymbol{a}$, i.e.\!\! an ellipsoid centred at $\boldsymbol{a}$ with orientation and semiaxes given by the eigenvectors and eigenvalues of $T T^T$. The ellipsoid dimension is $\rank(T) = \rank(\Theta) - 1$. Points inside the ellipsoid can be reached via convex combinations of projective measurements, and conversely a POVM element is a positive operator and so can be spectrally decomposed into a mixture of projectors, thus giving a point within the ellipsoid.

Now consider a general state with $\boldsymbol{b}\neq\boldsymbol{0}$. If $b=1$ then $\rho$ is a product state in which case there is no steering and the steering ellipsoid is the single point $\ve{a}$. For the case $b < 1$, we find that the SLOCC operator $\I\otimes(2\rho_B)^{-\frac{1}{2}}$ corresponds to a Lorentz boost $L_{\boldsymbol{b}}$ by a `velocity' $\boldsymbol{b}$ that transforms $\rho_B$ to the maximally mixed state (which has $\boldsymbol{b} = \boldsymbol{0}$). We refer to this special filtered state $\widetilde{\rho}$ as the \textit{canonical state} on the SLOCC orbit $\mathcal{S}(\rho)$.
Since SLOCC operations on Bob do not affect Alice's steering ellipsoid, the parameters of an arbitrary state's steering ellipsoid are obtained by simply boosting $\Theta$ by $L_{\boldsymbol{b}}$ and reading off the ellipsoid parameters. This gives a steering ellipsoid centred at $\boldsymbol{c}_A = \frac{\boldsymbol{a}-T\boldsymbol{b}}{1-b^{2}}$, with orientation and semiaxes lengths $s_i = \sqrt{q_i}$ given (Appendix~\ref{sec_SteerEll_Deriv}) by the eigenvectors and eigenvalues $q_i$ of the ellipsoid matrix
\begin{equation}\label{QA}
Q_A =\frac{1}{1-b^2}\left( T-\boldsymbol{a}\boldsymbol{b}%
^{T}\right) \left(  \I+\frac{\boldsymbol{%
b}\boldsymbol{b}^{T}}{ 1-b^{2}}\right) \left( T^{T}-\boldsymbol{b}\boldsymbol{a}^{T}\right).
\end{equation}

To obtain $\E_B$, the ellipsoid at B, we simply perform a swap of $A$ and $B$, which corresponds to transposing $\Theta$ and sends $\boldsymbol{b}\rightarrow\boldsymbol{a}, \boldsymbol{a}\rightarrow\boldsymbol{b}, T \rightarrow T^T$. Hence $\E_A$ and $\E_B$ always have the same dimensionality, $\rank(\Theta)-1$. This completes the construction of the geometric data $(\E_A, \boldsymbol{a}, \boldsymbol{b})$ for a given state $\rho$. Next, we describe the reverse direction: obtaining $\rho$ from an ellipsoid $\E_A$ and the vectors $\boldsymbol{a}$ and $\boldsymbol{b}$.

\textit{Reconstruction of $\rho$ from geometric data.} Given $\boldsymbol{a}, \boldsymbol{b}, \E_A = (Q_A, \ve{c}_A)$, to recover $\rho$ we need $T$. In Appendix~\ref{sec_RecRho} we prove that this matrix is given by
\begin{equation}
T=\frac{1}{\gamma} \left(\gamma \ve{c}_A\ve{b}^T + \sqrt{Q_A}O + \frac{\gamma-1}{b^2}\sqrt{Q_A}O\ve{b}\ve{b}^T \right)
\label{Teq}
\end{equation}
where $O \in \mathrm{O}(3)$ satisfies $ \boldsymbol{a} = \ve{c}_A + \sqrt{Q_A}O\ve b$. This specifies $O$ up to a rotation $O'\in \mathrm{O}(3)$ such that $O'\ve b = \ve b$. The action of $O'$ can be encoded, for example, by a colouring of $\E_A$ as in \cite{Altepeter}. In this way the steering ellipsoid can be used as a faithful representation of $\rho$. $O'$ corresponds to a local unitary and/or partial transpose on Bob's system, and so is irrelevant for any correlation properties such as entanglement.

\textit{``Complete''  and ``incomplete" steering.}
The steering ellipsoid specifies which \emph{states} Bob can steer Alice to. A more subtle question is which \emph{decompositions} of Alice's reduced state he can steer to. Clearly a necessary condition is that all of the states in the decomposition must be in $\mathcal{E}_A$, surprisingly however, it turns out that this is not sufficient.

Consider some non-product two-qubit state with ellipsoids $\mathcal{E}_A$ and $\mathcal{E}_B$. The following are equivalent (Appendix~\ref{sec_compthm}):
\begin{enumerate}
  \item (\emph{Complete steering of A}) For any convex decomposition of $\boldsymbol{a}$ into vectors in $\mathcal{E}_A$ or on its surface, there exists a POVM for Bob that steers Alice to it.
  \item The affine span of $\mathcal{E}_B$ contains the origin.
\end{enumerate}
In particular, these conditions hold for all non-degenerate ellipsoids (which includes all entangled states) as well as all states where $\boldsymbol{b} = \boldsymbol{0}$. However, complete steering is not symmetric: the state $\rho=\frac12\left( \ket{00}\bra{00} + \ket{+1}\bra{+1} \right)$ has complete steering of Alice by Bob, but not vice versa.

\begin{figure}[h!]
\begin{tabular}{|M|M|M|M|}
Dim. & Type & Alice's ellipsoid & Bob's ellipsoid \\
3 \begin{sideways}(Obese)\end{sideways}& \begin{sideways}Ent. \& Comp. \end{sideways} & \includegraphics[width=1.2in]{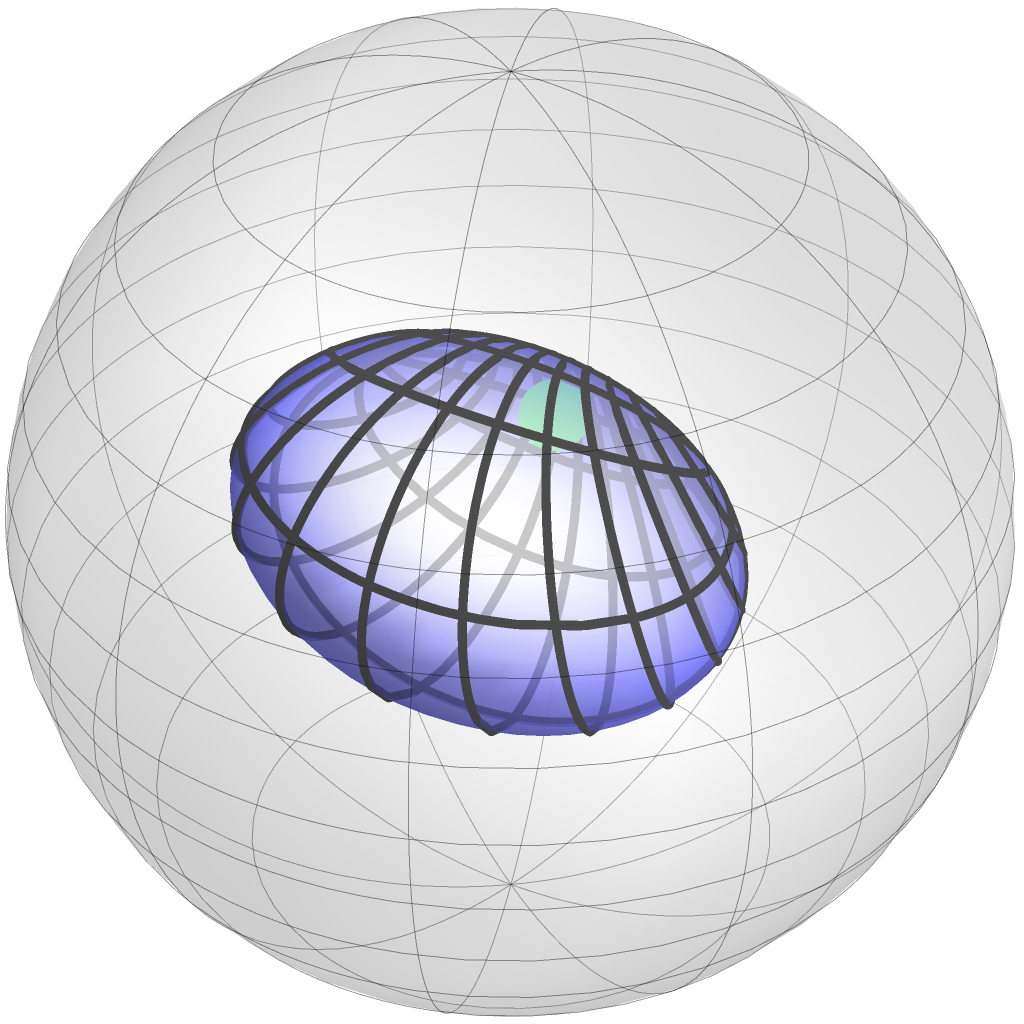} & \includegraphics[width=1.2in]{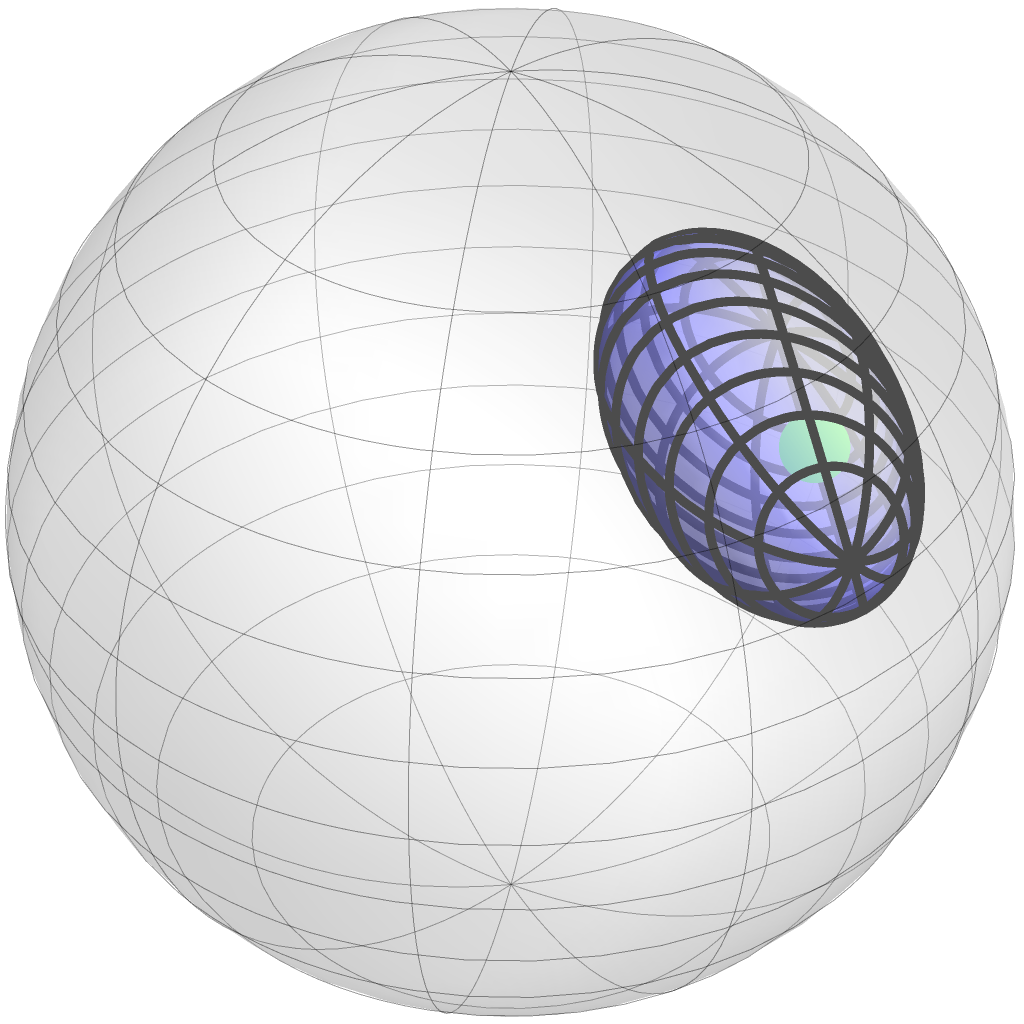}\\
3 \begin{sideways}(Obese)\end{sideways}& \begin{sideways}Sep. \& Comp.\end{sideways} & \includegraphics[width=1.2in]{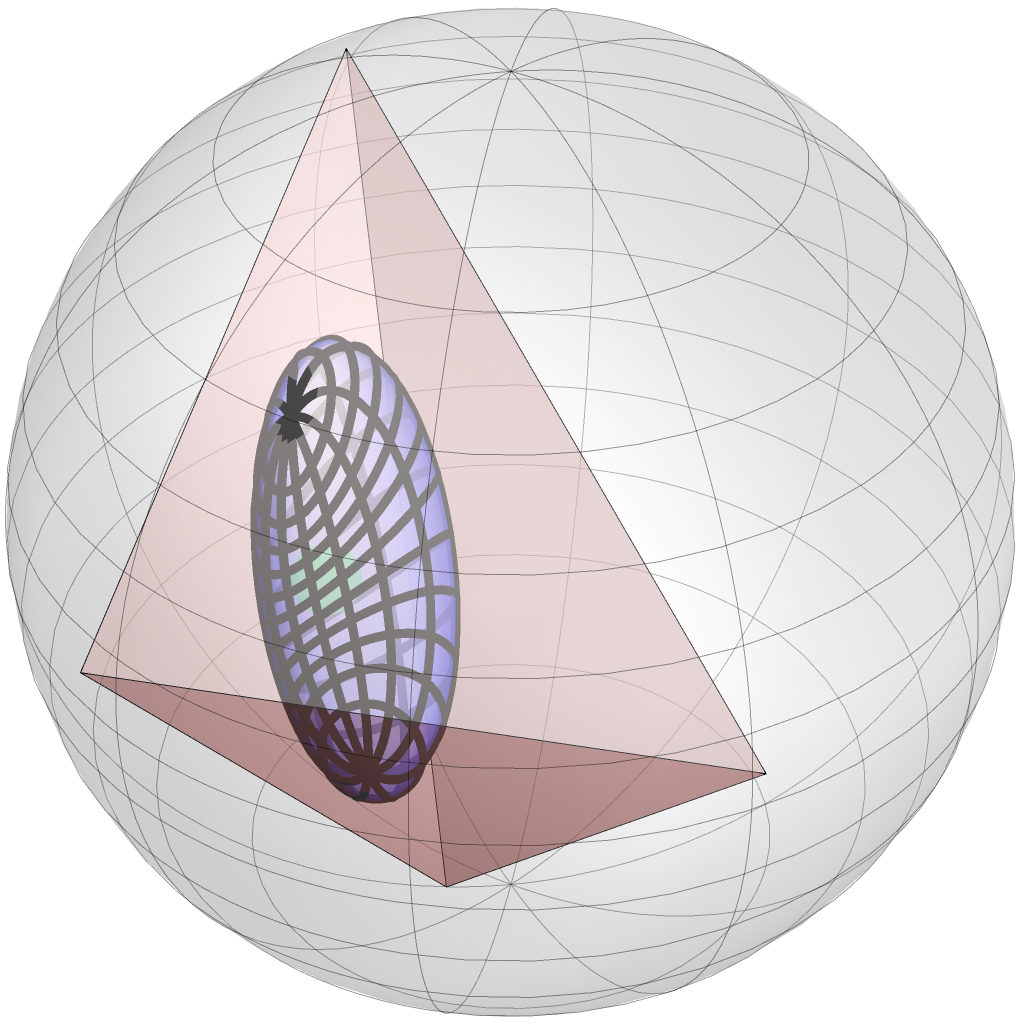} & \includegraphics[width=1.2in]{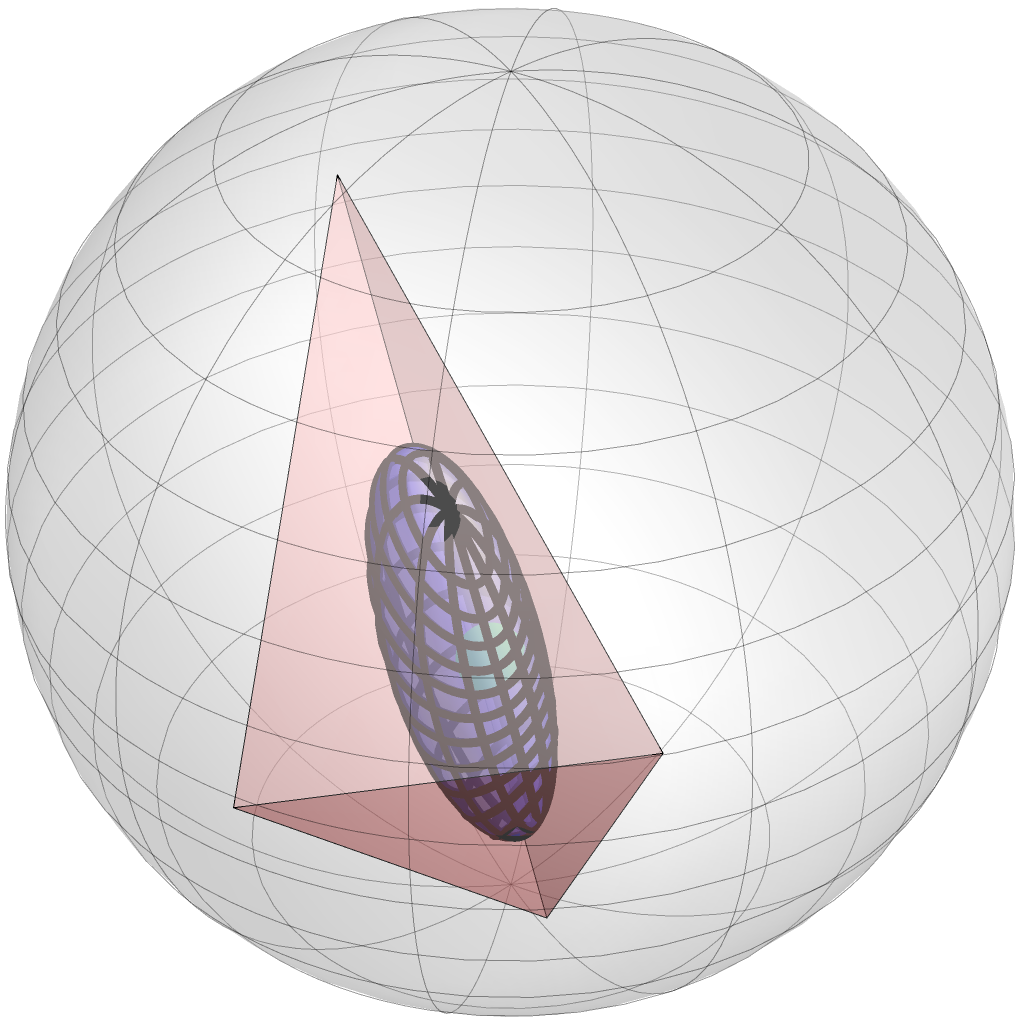}\\
2 \begin{sideways}(Pancake)\end{sideways}& \begin{sideways}Sep. \& Incomp.\end{sideways} & \includegraphics[width=1.2in]{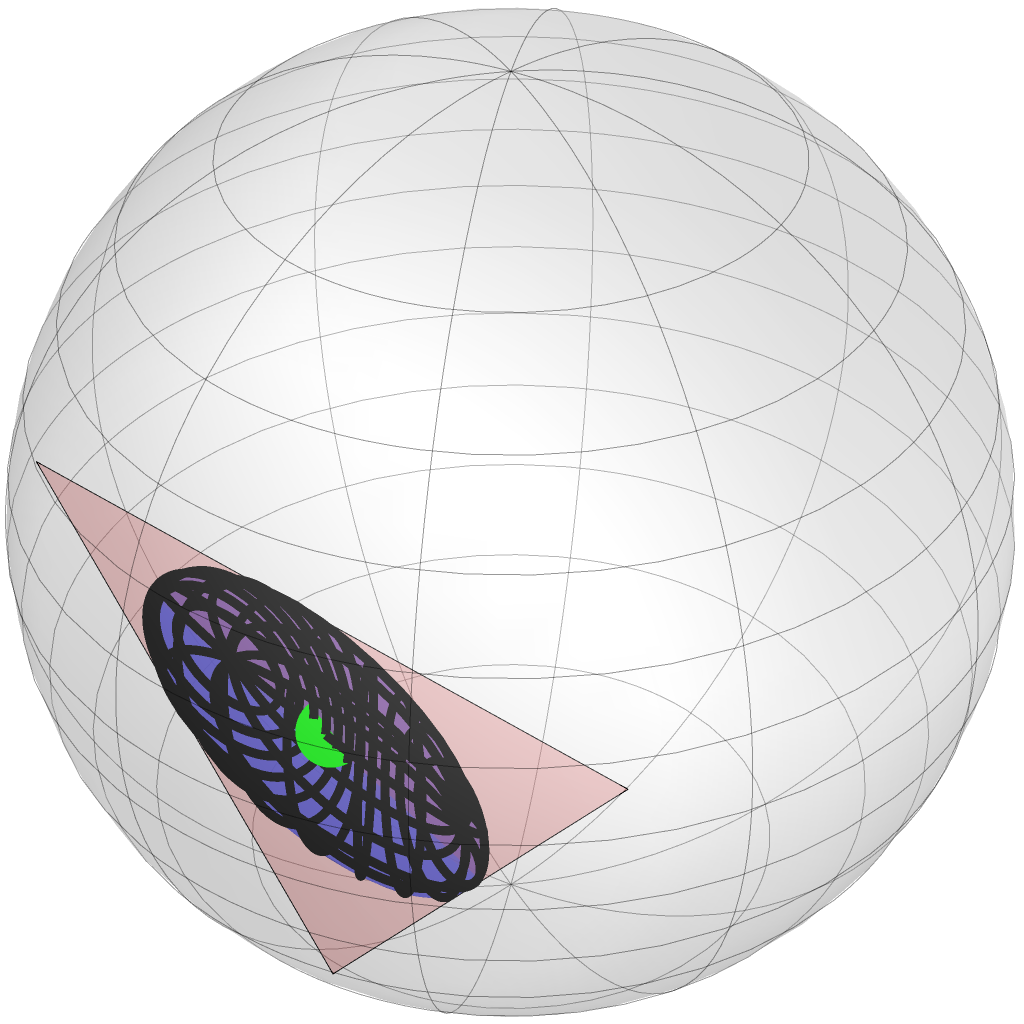} & \includegraphics[width=1.2in]{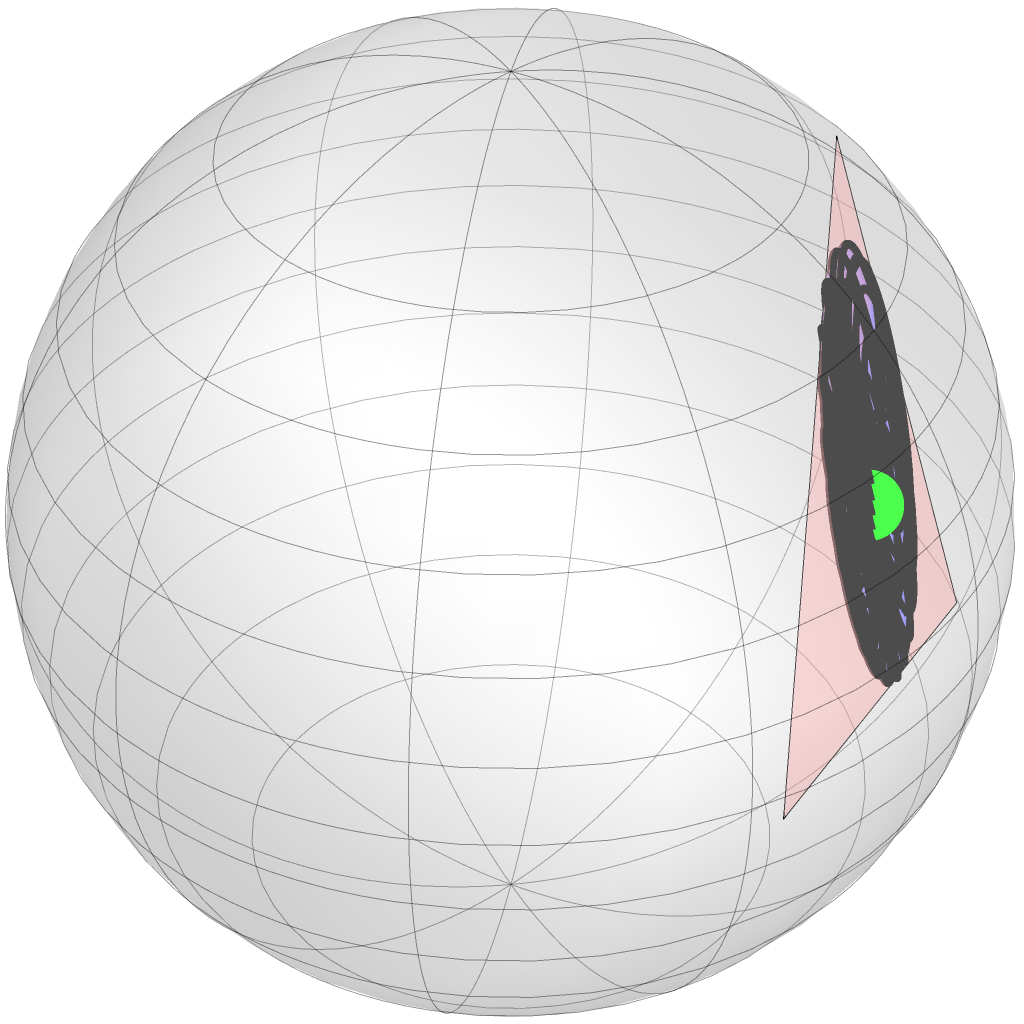}\\
2 \begin{sideways}(Pancake)\end{sideways}& \begin{sideways}Sep. \& Comp.\end{sideways} & \includegraphics[width=1.2in]{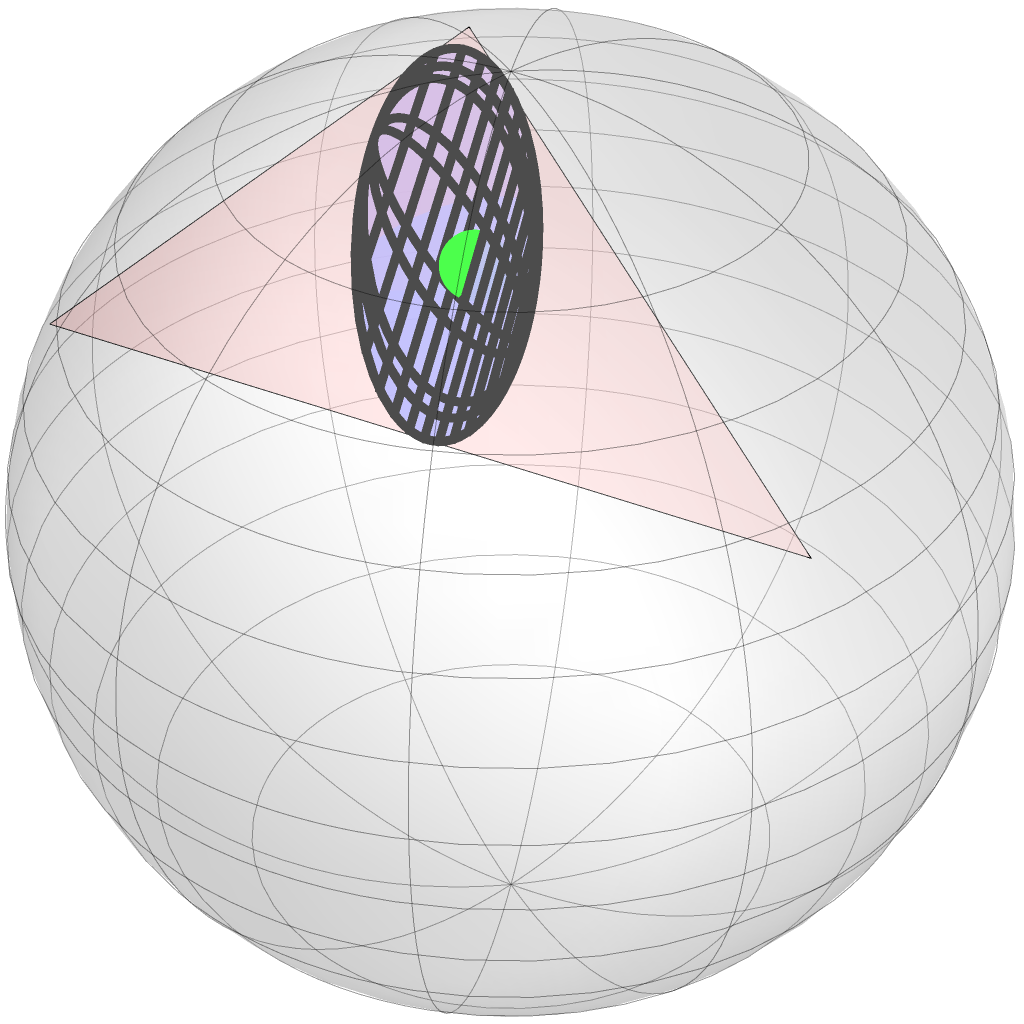} & \includegraphics[width=1.2in]{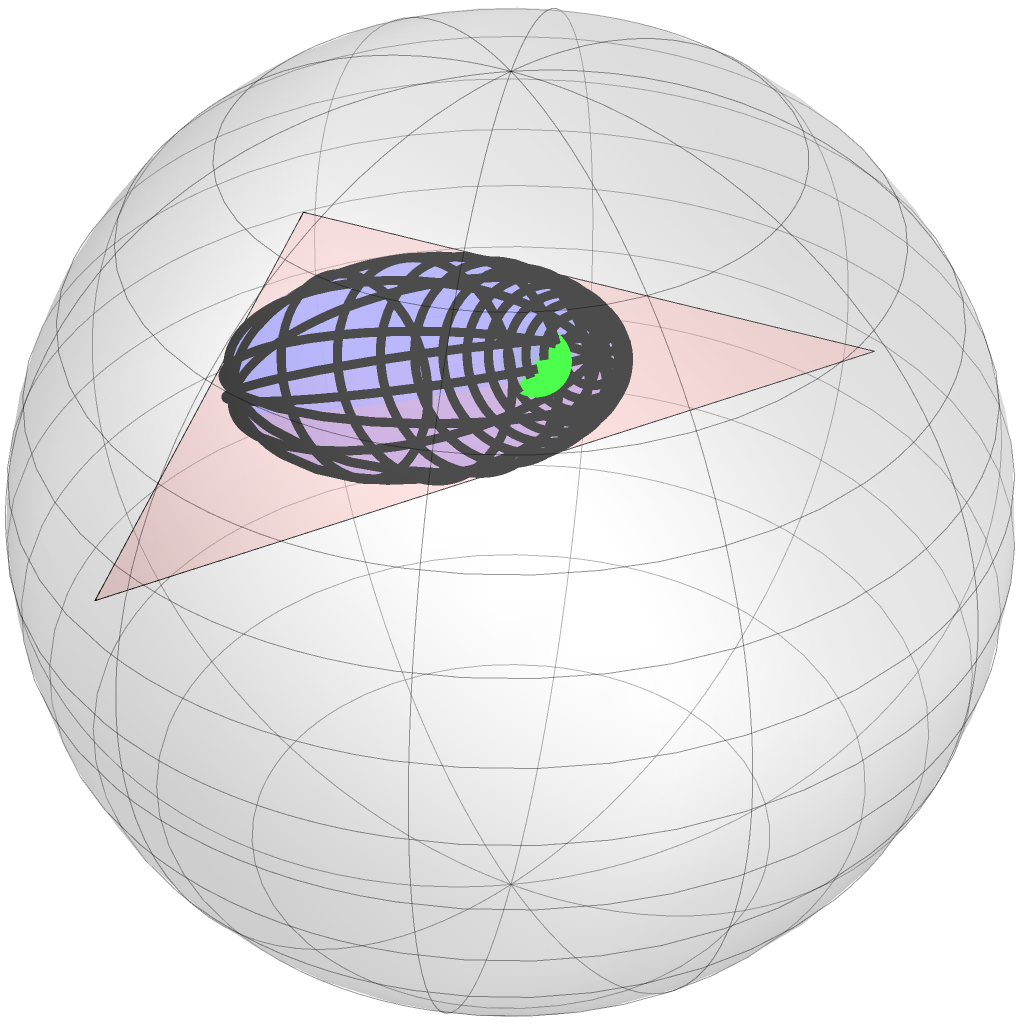}\\
1 \begin{sideways}(Needle)\end{sideways}& \begin{sideways}Sep. \& Incomp. \end{sideways} & \includegraphics[width=1.2in]{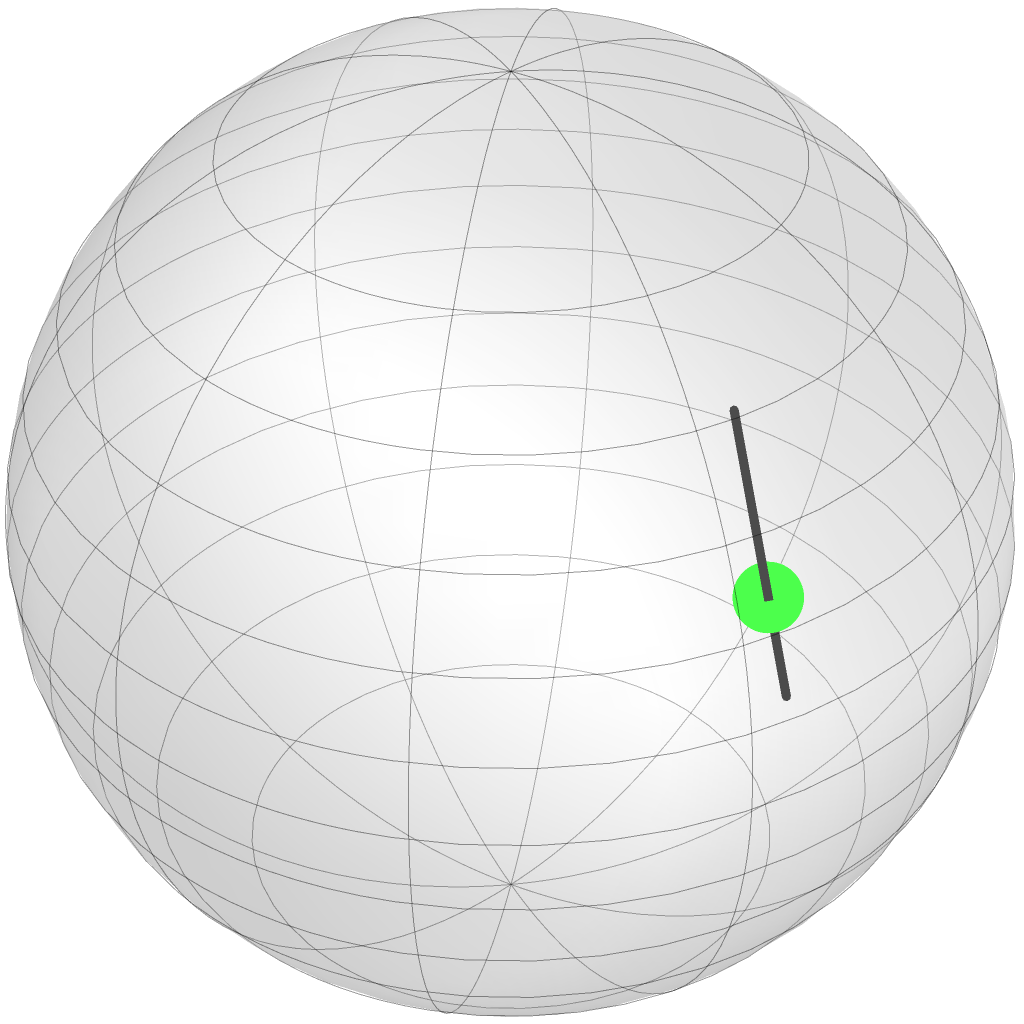} & \includegraphics[width=1.2in]{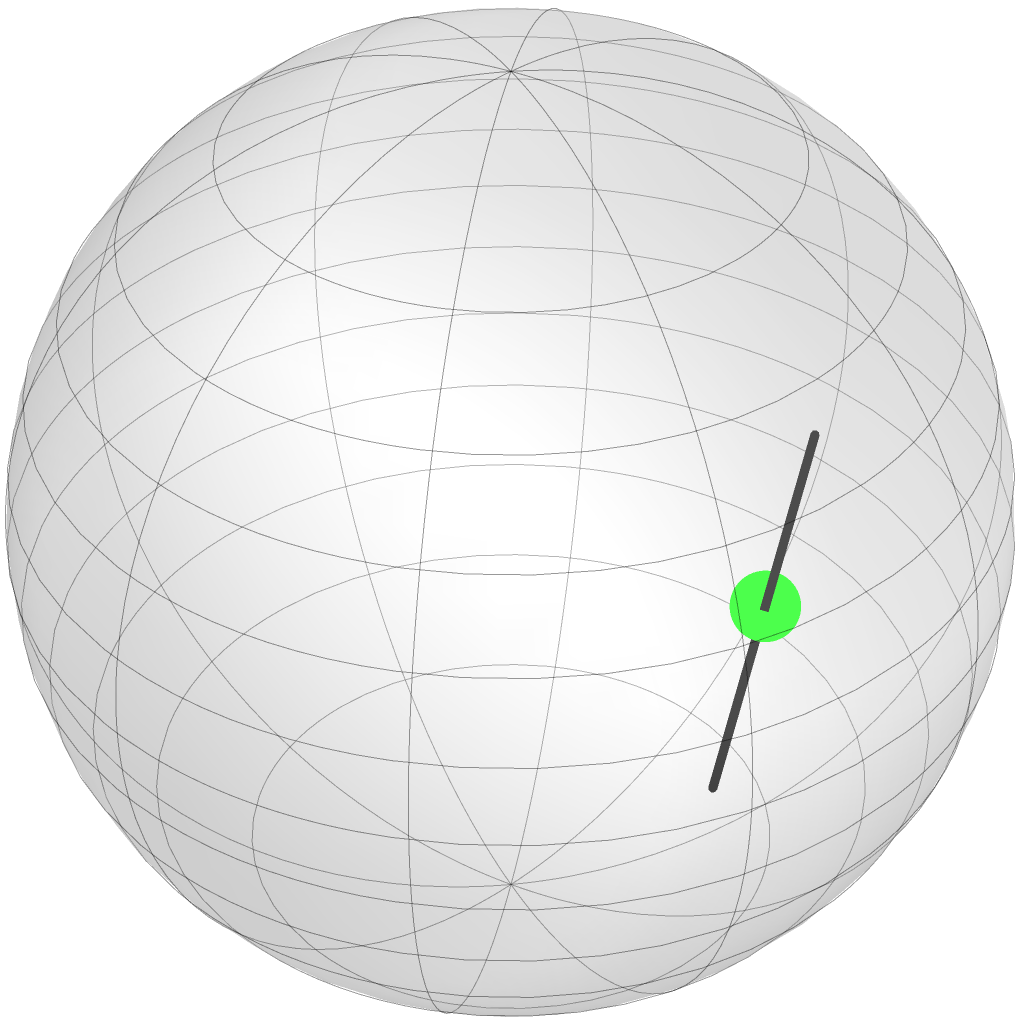}
\end{tabular}
\caption{\textbf{The classes of steering ellipsoids}. Here Sep. = ``Separable", Ent. = ``Entangled" labels they type of correlation, while Incomp. = ``Incomplete" and Comp.= ``Complete", labels the type of steering. The green dot is the reduced state in the respective Bloch sphere. States with $\E_A$ being 3-dimensional have non-zero volume (or simply "obesity") and these are either entangled or separable. The state $\rho$ is separable if and only if $\E_A$ fits inside a tetrahedron inside the Bloch sphere of $A$. For separable states, the set $\E_A$ can also be 2-dimensional (a steering pancake), or 1-dimensional (a steering needle), or trivially 0-dimensional (not shown). For these cases, steering is either ``complete" if all ensemble decompositions of $\boldsymbol{a}$ in $\E_A$ are attainable (when the span of $\E_B$ contains $\frac{1}{2}\I$), otherwise the steering is ``incomplete". Zero discord occurs only for radial steering needles.}\label{table-of-ellipsoids}
\end{figure}

\textit{The three geometric contributions to entanglement.} The Peres-Horodecki criterion \cite{PeresPT,HorodeckiPT} asserts that a two-qubit state $\rho_e$ is entangled if and only if $\rho_e^{T_B}$ has a negative eigenvalue. Furthermore, it can be shown \cite{FrankFilter} that at most one eigenvalue of $\rho_e^{T_B}$ can be negative, and \cite{Sanpera} that $\rho_e^{T_B}$ is full rank for all entangled states. Hence $\det \rho_e^{T_B} < 0 $ is a necessary and sufficient for entanglement.

Suppose $\rho$ is entangled, then any state in its SLOCC orbit $\mathcal{S}(\rho)$ is also entangled \cite{FrankFilter}, including the canonical state $\can \rho \in \mathcal{S}(\rho)$. It follows that $\rho$ is entangled if and only if $\det(\can{\rho}^{T_B})<0$. However, the states $\rho$ and $\can{\rho}$ share the same $\E_A$, and so expanding $\det(\can{\rho}^{T_B}) <0$ in the geometric representation we find that $\rho$ is entangled if and only if a physical steering ellipsoid with centre $\ve c = c \ve{\hat{n}}$ and matrix $Q$ satisfies
\begin{equation}
\label{DetEntCond}
c^4-2c^2(1-\tr Q + 2{\ve{\hat n}}^T Q {\ve{\hat n}}) + h(Q) <0,
\end{equation}
where $h(Q) := 1- 8  \sqrt{\det Q} +2 \tr (Q^2) -(\tr (Q))^2 - 2\tr( Q )$, and we drop $A,B$ labels as entanglement is a ``symmetric'' relation.
This equation is manifestly invariant under global rotations, corresponding to local unitaries on the quantum state, and shows that correlations between the qubits manifest themselves in three geometric ways: (1) the distance $c$ of the ellipsoid centre from the origin, (2) the size of the ellipsoid and (3) its ``skew'', captured by the term ${\ve{\hat n}}^T Q {\ve{\hat n}}$, which reflects the alignment of the ellipsoid relative to the radial direction described by centre unit vector $ {\ve{\hat n}}$.

\textit{The nested tetrahedron condition}. The condition for entanglement given by equation \eqref{DetEntCond} provides a compact algebraic condition for non-separability and uncovers contributions from different geometric aspects. However, the representation captures the distinction between separable and non-separable states in another elegant way: \begin{quote}\textit{A two-qubit state $\rho$ is separable if and only if its steering ellipsoid $\E_A$ fits inside a tetrahedron that fits inside the Bloch sphere.} \end{quote}
To prove necessity, suppose Alice and Bob share a separable state $\rho = \sum_{i=1}^n p_i \alpha_i \otimes \beta_i$. Since we can always take $n \leq 4$ \cite{PhysRevLett.80.2245}, the Bloch vectors of the $\alpha_i$ define a (possibly degenerate) tetrahedron $\mathcal{T}$ within Alice's Bloch sphere. Bob's outcome $\hat E$ collapses Alice to $\sum_{i=1}^n \frac{ \tr(\hat E \beta_i)}{\tr(\hat E \rho_B)}p_i \alpha_i$. Hence her steered Bloch vector will be a convex combination of the Bloch vectors for the $\alpha_i$ --- in other words her steering ellipsoid is contained in $\mathcal{T}$.

We prove in Appendix~\ref{Tet_proof} that the non-trivial converse holds: any ellipsoid that fits inside a tetrahedron that itself fits inside the Bloch sphere must arise from a separable state, and thus the nested tetrahedron condition is both necessary and sufficient for separability of the state.

This key geometric insight leads to some non-trivial corollaries. For example, for any separable state $\rho$, the minimal number of product states in an ensemble decomposition $\rho = \sum_{i=1}^n p_i \alpha_i \otimes \beta_i$ is $n=\rank (\Theta)$. If $\rank (\Theta) = 1$ we have a product state, and so $n=1$, while if $\rank(\Theta) = 2$ we have that $\E_A$ is a line segment and we form a decomposition of $\rho$ using the endpoints of this segment, giving $n=2$. The case $\rank (\Theta) = 3$ is slightly more involved, but follows from any ellipse inside a tetrahedron inside the unit sphere also lies inside a triangle in the unit sphere (Appendix~\ref{Tet_proof}, \cite{Mihai}). Finally it is known \cite{PhysRevLett.80.2245} that any separable state can be written using 4 product states, which covers the case $\rank(\Theta)=4$. Combining this with the above results on complete steering provides a natural geometrical classification of two qubit states, as in Fig. \ref{table-of-ellipsoids}.

\textit{Quantum discord and ellipsoid orientation.} Quantum discord  has received much attention as a measure of the quantumness of correlations (see \cite{Discord} for details) in which zero discord for one party roughly corresponds to them possessing a non-disturbing projective measurement. Within the geometric representation it is readily seen that $\rho$ has zero discord for $A$ if and only if $\E_A$ degenerates to a radial line segment, while $\rho$ has zero discord for $B$ if and only if $\E_A$ is one-dimensional and $b = \frac{2|\boldsymbol{c}_A - \boldsymbol{a}|}{l_A}$, where $l_A$ is the length of $\E_A$ (Appendix~\ref{sec_discord}).

To illustrate the effect of the alignment of $\E_A$ on the entanglement and discord of a state, we can consider a one-parameter family of states of the form $\rho(\theta) = \frac{1}{4}\left(\I + \frac{1}{2}\sigma_z\otimes \I + \sum_{ij}  T_{ij}(\theta)  \sigma_i \otimes \sigma_j\right)$, for which the ellipsoid skew varies smoothly with $\theta$ while maintaining a constant volume for $\E_A$.  Specifically, we have that $ T(\theta)=R_y(\theta) K R^T_y(\theta)=\sqrt{Q_A(\theta)} $, and so $R_y(\theta) \in \mathrm{SO(3)}$ generates states with inequivalent correlations via rotation of the steering ellipsoid around its own centre $\boldsymbol{c}_A = (0,0,\frac{1}{2})^T$, note that this ``internal rotation" is distinct from a global rotation generated by a local unitary on the state. We choose $K = \diag(-\frac{9}{20},-\frac{3}{10},-\frac{3}{10})$ so that $\rho(\theta) \geq 0$, for all $\theta \in [0,\pi)$. This family of states illustrates opposing behavior of the discord and concurrence as a function of $\theta$, see Fig. \ref{DiscConcVol}. The entanglement favors an orientation in which the longest semiaxis is aligned (radial) with $\boldsymbol{c}_A$ at the point $\theta = \pi/2$, while discord is maximised when the short semiaxis is radial, at $\theta=0,\pi$ \footnote{Discord can be calculated analytically when $\theta  =0,\frac{\pi}{2}$, and at these points $\rho$ is an X-state \cite{ShiXstateDiscord} and this agrees with our findings.}\nocite{ShiXstateDiscord}

\begin{figure}[h!]
\includegraphics[width=3.5in]{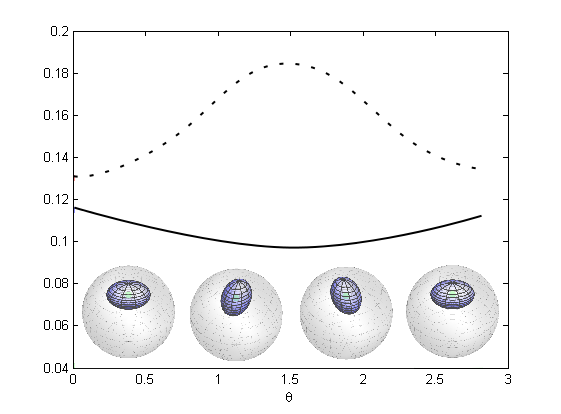}
\caption{Discord (solid) and concurrence (dotted) of the state $\rho(\theta)$ as a function of the orientation $\theta$ of the ellipsoid. Entanglement is maximized when the major axis is radial.}
\label{DiscConcVol}
\end{figure}

\textit{Volume of the ellipsoid.} The expression for the volume of $\E_A$ provides a compact and non-trivial relation between the steering properties of $\rho$ and the ranks of $\rho$ and $\rho^{T_B}$.
The volume of any ellipsoid is proportional to the product of its semiaxes $V=\frac{4\pi}{3}s_1s_2s_3$. Therefore $\mathcal{E}_A$ has volume $V_A = \frac{4\pi}{3}\sqrt{\det Q_A}$ using the ellipsoid matrix in equation \eqref{QA}, which may be rewritten as $V_A=\frac{4\pi}{3}\frac{\left\vert\mathrm{det}\Theta\right\vert}{(1-b^2)^2}$. However, it turns out (Appendix~\ref{sec_volform}) that $\mathrm{det} \Theta=16 (\mathrm{det} \rho^{T_B} - \mathrm{det} \rho)$, therefore
\begin{equation}
\label{VolumeInRho}
V_A = \frac{64\pi }{3}\frac{\left\vert \det  \rho
-\det  \rho^{T_B}  \right\vert }{\left( 1-b^{2}\right) ^{2}}.
\end{equation}
The  $\mathcal{E}_B$ volume follows from $V_A$ via the simple relation
$V_B = \frac{(1-b^2)^2}{(1-a^2)^2}V_A$.

The ellipsoid volume is a non-linear entanglement criterion. Specifically, the Werner state on the separable-entangled boundary has $\E_A$ being the maximal sphere volume $V_\star= 4 \pi/81$ inscribed inside the largest possible tetrahedron that can be inscribed inside the unit sphere \footnote{More formally, it was shown \cite{FJohn, Ball} that every convex body in $\mathbb{R}^n$ Euclidean space contains a unique maximal volume ellipsoid, and that a Euclidean ball is that maximum if and only if there exist vectors $\{\boldsymbol{x}_k\}$ in the set of contact points of the ball with the convex body, and positive numbers $\{c_k\}$ such that $\sum_k c_k \boldsymbol{x}_k = 0$ and $\sum_k c_k \boldsymbol{x}^T_k \boldsymbol{x}_k =r^2 \I$, where $r$ is the radius of the ball. Inspection of the in-sphere for the regular tetrahedron in the Bloch sphere shows that is is the case for $\{\boldsymbol{x}_k\}$ being the four intersection points.}\nocite{FJohn,Ball}. We immediately deduce that any state with $\E_A$ that has volume $V> V_\star$ must be entangled. Note that entangled states can have $V \leq V_\star$.

Since $V > V_\star$  can only be attained by entangled states, whilst zero discord states have one-dimensional (degenerate) ellipsoids, we see that non-zero volume, or ``obesity'', is strictly stronger than discord but strictly weaker than entanglement.

\textit{Conclusion.} The quantum steering ellipsoid provides a faithful representation of any two-qubit state and a natural geometric classification of states. It yields clear and intuitive understanding into the usual key aspects of two-qubit states, uncovers surprising new features (such as the nested tetrahedron condition, skew and obesity, and incomplete steering) while prompting novel questions, such as: can we use (\ref{DetEntCond}) to define a class of ``least-classical'' separable states for fixed $(\boldsymbol{a}, \boldsymbol{b}, \boldsymbol{c})$? Can we use the nested tetrahedron condition to provide a simple construction for the best separable approximation \cite{Lewenstein} for a state $\rho$? What is the geometric characterisation of the possible steering ellipsoids for an arbitrary two-qubit state? This would potentially be useful when using ellipsoids to visualize the results of two-qubit state tomography. In Appendix~\ref{sec_extend} we have provided a discussion of several extensions of this work, beyond the two-qubit scenario to higher dimensional systems.

\begin{acknowledgments}
We wish to acknowledge Antony Milne for his useful comments, and Zuzana Gavorova and Peter Lewis for their early contributions to this topic. MP and SJ supported by the
EPSRC; SJ funded by EPSRC grant EP/K022512/1; DJ funded by the Royal Commission for the Exhibition of 1851; TR supported by the Leverhulme Trust. Research at Perimeter Institute is supported in part by the Government of Canada through NSERC and by the Province of Ontario through MRI.

\end{acknowledgments}
\bibliography{SteerEll_bib_SJ_041012}
\appendix
\section{\label{sec_SteerEll_Deriv}The steering ellipsoid for a general two qubit state}

Previous approaches to representing two qubit states have included partitioning the set of all two qubit states into SLOCC (Stochastic Local Operations and Classical Communication) equivalence classes \cite{Vis2Qubits}, which results in a three dimensional representation of a state $\rho$ through its SLOCC orbit, defined as
\begin{equation}
\mathcal{S}(\rho):=\left\{ \frac{  S_A \otimes S_B \rho( S_A \otimes S_B)^{\dag}  }{  \tr( S_A \otimes S_B \rho ( S_A \otimes S_B)^{\dag})  } : \,  S_A,S_B \in \mathrm{GL}(2,\mathbb{C}) \right \}\nonumber
\end{equation}
with $\mathrm{GL}(2,\mathbb{C})$ the group of invertible, complex 2 $\times$ 2 matrices. However replacing $\rho$ with its SLOCC orbit is far from faithful, and amounts to a highly coarse-grained representation of the state that erases much of its detail. Another approach is to start with a Pauli basis expansion of $\rho$, which can be converted via a state-dependent choice of local unitaries on both qubits, to a representation involving three spatial vectors \cite{HorodeckiPT}. Again, this is still not a faithful representation, and more importantly it is extremely difficult to develop any intuition for what the vector representing correlations actually means.

In this section we provide the details for constructing the steering ellipsoid representation of an arbitrary two qubit quantum state.

Consider a two-qubit state
\begin{equation}
\Theta =
\begin{pmatrix}
1 & \ve b^T\\
\ve a & T
\end{pmatrix}.
\end{equation}

The matrix $\Theta$ transforms, up to a normalization, under SLOCC operations $\rho' = S_A\otimes S_B \rho (S_A \otimes S_B)^\dag$ as $\Theta'=\Lambda_A \Theta \Lambda^T_B$ \cite{FrankFilter}
where $\Lambda_A, \Lambda_B$ are proper orthochronous Lorentz transformations given by
$\Lambda_{W} = \Upsilon S_W \otimes S_W^* \Upsilon^\dag/|\det S_W|,$ $W \in \{A,B\}$ and
\begin{equation}
\label{Seagull}
\Upsilon=\frac{1}{\sqrt{2}}
\begin{pmatrix}1 & 0 & 0 & 1 \\
    0 & 1 & 1 & 0 \\
    0 & i & -i & 0 \\
    1 & 0 & 0 & -1
    \end{pmatrix}.
\end{equation}
In particular, local unitaries rotate the Pauli basis: $\boldsymbol{a}\rightarrow O_A\boldsymbol{a}$, $\boldsymbol{b}\rightarrow O_B\boldsymbol{b}$, $T\rightarrow O_AT O_B^T$, where $O_A,O_B \in \mathrm{SO}(3)$.

If $b = 1$ then $\Theta$ must be a product state and no steering can occur, so assume otherwise. Define $\gamma = 1/\sqrt{1-b^2}$, and the ``canonical'' filtered state
\begin{equation}
\can\Theta = \gamma \Theta L_{\ve b}, \label{thetapdef}
\end{equation}
where $L_{\ve b}$ is a Lorentz boost by $\ve b$ \cite{lorentz}:
\begin{equation}
L_{\ve b} = \begin{pmatrix}
\gamma & -\gamma \ve b^T \\
-\gamma \ve b & \I + \frac{\gamma - 1}{b^2}\ve b \ve b^T
\end{pmatrix}.
\end{equation}

The $\gamma$ in \eqref{thetapdef} ensures that $\can\Theta$ is normalized: the top-left element is
\begin{equation}
\gamma \begin{pmatrix} 1 & \ve b^T\end{pmatrix} \begin{pmatrix}\gamma \\-\gamma\ve b\end{pmatrix} = \gamma^2(1-b^2) = 1.
\end{equation}

This particular boost is special as it leaves Bob's reduced state is maximally mixed: the top-right block of $\can\Theta$ is
\begin{multline}
\gamma \begin{pmatrix} 1 & \ve b^T\end{pmatrix} \begin{pmatrix}-\gamma \ve b^T \\ \I + \frac{\gamma - 1}{b^2}\ve b \ve b^T
\end{pmatrix} = \\\gamma(-\gamma \ve b^T + \ve b^T + (\gamma - 1)\ve b^T) = \ve 0^T.
\end{multline}

By the above two observations we can write
\begin{equation}
  \can\Theta= \begin{pmatrix}1 & \ve 0^T \\ \can{\ve a} & \can T \end{pmatrix}.\label{thetapform}
\end{equation}

The set of states Bob can steer Alice to will be exactly the same for $\Theta$ and any $\Theta' = \Theta \Lambda_B$ (this includes the special case $\Lambda_B = \gamma L_{\ve b}$): $Y = \frac12 \Theta X \iff Y = \frac12 \Theta' X'$ where $X' = \Lambda^{-1}_B X$. When $X, X'$ are viewed as 4-vectors in Minkowski space, then $X$ corresponds to a positive operator iff $X'$ does because $\Lambda^{-1}_B$ preserves the forward light cone.

In the instance when Alice and Bob share the canonical state, $\Theta' = \can\Theta$, what is the set of states that Alice's qubit can be steered to? Writing $X = \begin{pmatrix}t \\ \ve x\end{pmatrix}$ we see that without loss of generality we can take $t=1$ since the effect of multiplying $X$ by a positive number is undone when normalizing $Y$. Hence the positivity condition becomes $x \leq 1$. If $x<1$ then we could write $X$ as a convex combination of ones with $x=1$, so we restrict our attention to the latter case. So any state $\ve y$ that Bob can steer Alice to is a convex combination of states of the form
$\ve y = \can{\ve a} + \can{T}\ve x$
with $x = 1$. But this defines a linear image of the unit sphere of all $\ve x$ displaced by $\can{\ve a}$, or in other words an ellipsoid with centre $\ve{c}_A = \can{\ve a}$.

Consider the singular value decomposition $\can T = O_1 D O_2$. $O_2$ simply rotates and/or reflects the unit sphere of all $\ve{x}$ and so can be ignored when thinking about the ellipsoid properties. $D$ rescales the sphere and thus gives the lengths of the semi-axes of the resulting ellipsoid. $O_1$ rotates the semi-axes. We have $\can T \can T^{T} = O_1 D^2 O_1^T$ and so the lengths of the semi-axes can also be found by square rooting the eigenvalues of $\can T \can T^T$ whilst their directions can be found from its eigenvectors.

Combining \eqref{thetapdef} with \eqref{thetapform} we find
\begin{equation}
\can {\ve a} = \gamma\begin{pmatrix}\ve a & T\end{pmatrix}\begin{pmatrix}\gamma \\ -\gamma \ve b\end{pmatrix} = \gamma^2(\ve a - T\ve b):=\ve{c}_A,\label{Ell_centre}
\end{equation}
\begin{eqnarray}
\can T &=& \gamma\begin{pmatrix}\ve a & T\end{pmatrix}\begin{pmatrix}-\gamma \ve b^T \I + \frac{\gamma - 1}{b^2}\ve b \ve b^T
\end{pmatrix}\\ &=& \gamma\left(-\gamma \ve a \ve b^T + T + \frac{\gamma - 1}{b^2}T\ve b \ve b^T\right).
\end{eqnarray}

And so, after some algebra,
\begin{align}
\can T \can T^T = \gamma^2(T T^T - \ve a \ve a^T) + \can{\ve a} \can{\ve a}^T.
\end{align}


Using $-\ve a \ve b^T \left(\I + \frac{\gamma-1}{b^2}\ve b \ve b^T\right) = -\gamma \ve a \ve b^T$ we can also write
\begin{equation}
\can T = \gamma\left(T - \ve a \ve b^T\right)\left(\I + \frac{\gamma-1}{b^2}\ve b \ve b^T\right)\label{alternativeT}
\end{equation}
leading to the form in the main text:
\begin{equation}
\can T \can T^T = \gamma^2(T-\ve a \ve b^T)(\I + \gamma^2 \ve b \ve b^T)(T^T - \ve b \ve a^T) =: Q_A
\label{eq_QA}
\end{equation}

Since $\gamma L_{\ve b}$ is invertible we have $\rank(\can\Theta) = \rank(\Theta)$. By counting linearly independent columns in \eqref{thetapform} we have $\rank(\can \Theta) = \rank(\can T) + 1$ thus proving that the dimension of the ellipsoid is $\rank(\Theta) - 1$.




\section{\label{sec_RecRho}Reconstructing the state from its steering ellipsoid and the Bloch vectors}
We will need the following simple property of orthogonal matrices.
\begin{lem}
Suppose $O_1, O_2 \in O(n)$, $P$ is a projector and $\ve b$ a vector, with
\begin{equation}
PO_1\ve b = PO_2\ve b.\label{lemcond}
\end{equation}
Then there exists a $O_3 \in O(n)$ with
\begin{equation}
O_3 \ve b = \ve b \label{lemconc1}
\end{equation}
and
\begin{equation}
PO_1O_3 = PO_2. \label{lemconc2}
\end{equation}
\label{orthlem}
\end{lem}
\begin{proof}
Denote $Q = \I - P$. Notice that $\nrm{O_1\ve b} = \nrm{\ve b} = \nrm{O_2\ve b}$ and $\nrm{PO_1\ve b} = \nrm{PO_2\ve b}$ by equation~\eqref{lemcond}. Since $\nrm{\ve v}^2 = \nrm{P\ve v}^2 + \nrm{Q \ve v}^2$ for any $\ve v$, we have $\nrm{QO_1\ve b} = \nrm{QO_2 \ve b}$. Hence there exists an $O_4 \in O(n)$ that acts trivially in the support of $P$, i.e. $PO_4 = O_4P = P$, and sends $QO_2\ve b$ to $QO_1\ve b$. Define
\begin{equation}
O_3 = O_1^T O_4 O_2.
\end{equation}
Then
\begin{equation}
PO_1O_3 = PO_1 O_1^T O_4 O_2 = PO_4 O_2 = PO_2
\end{equation}
as required, and
\begin{multline}
O_3\ve b = O_1^T O_4 O_2\ve b = O_1^T (P+Q) O_4 (P+Q) O_2 \ve b \\= O_1^T(
P O_4 P +
P O_4 Q +
Q O_4 P +
Q O_4 Q)O_2\ve b =\\
O_1^T(P^2 +
PQ +
QP +
Q O_4 Q)O_2\ve b  = O_1^T (P + Q O_4 Q)O_2 \ve b
\end{multline}
since $P$ and $Q$ are orthogonal projectors. Using equation~\eqref{lemcond} and $O_4 Q O_2\ve b = Q O_1 \ve b$ we find
\begin{equation}
O_3 \ve b = O_1^T (P + Q^2)O_1 \ve b = O_1^T(P+Q)O_1 \ve b = O_1^T O_1 \ve b = \ve b
\end{equation}
as required.
\end{proof}

Suppose we know $\elip_A$ (i.e. $Q_A$ and $\ve c_A$) and the Bloch vectors $\ve a, \ve b$ for some two-qubit state $\Theta$. Defining $\can\Theta$ as the canonical state, equation \eqref{thetapform} in section \ref{sec_SteerEll_Deriv}, we know that $\can{\ve a} = \ve c_A$ and $\can T \can T^T = Q_A$ from equations \eqref{Ell_centre} and \eqref{eq_QA} respectively. Hence $\can T = \sqrt{Q_A}O_1$ for some $O_1 \in O(3)$. By inverting the Lorentz boost $\Theta = \frac{1}{\gamma}\can\Theta L_{\ve{-b}}$ (c.f. equation \eqref{thetapdef}) we see that $O_1$ must satisfy $\ve a = \ve c_A + \sqrt{Q_A}O_1\ve b$. Working out the inverted form of $T$ leads to the equation \eqref{Teq} given in the main text.

We now show that if we choose $\can T' = \sqrt{Q_A}O_2$ for an \emph{arbitrary} $O_2 \in O(3)$ satisfying $\ve a = \ve c_A + \sqrt{Q_A}O_2\ve b$ then the ``reconstructed'' two-qubit state $\Theta'$ differs at most from $\Theta$ by a local unitary and/or partial transpose on Bob's qubit. (Since entangled two-qubit states have negative partial transpose, they do not in fact have the partial transpose ambiguity.)

Since $\sqrt{Q_A} O_1 \ve b = \sqrt{Q_A} O_2\ve b$ we have $PO\ve b = PO'\ve b$ where $P$ projects onto the support of $Q_A$. Hence we can apply Lemma~\ref{orthlem} to obtain a third rotation $O_3 \in O(n)$. Let $s(\cdot) = \begin{pmatrix} 1 & \ve 0^T \\ \ve 0 & \cdot\end{pmatrix}$ denote its argument acting on the spatial part of space-time. Spatial rotations about a direction commute with boosts in that direction, and so equation~\eqref{lemconc1} gives $L_{-\ve b} s(O_3) = s(O_3) L_{-\ve b}$. Furthermore, equation~\eqref{lemconc2} gives $\can T O_3 = \sqrt{Q_A}O_1O_3 = \sqrt{Q_A}O_2 = \can T'$. Hence
\begin{multline}
\Theta s(O_3) = \can \Theta \frac{L_{-\ve b}}{\gamma} s(O_3) = \begin{pmatrix}1 & \ve 0^T \\ \can{\ve a} & \can TO_3\end{pmatrix}\frac{L_{-\ve b}}{\gamma} \\= \begin{pmatrix}1 & \ve 0^T \\ \can{\ve a} & \can T'\end{pmatrix}\frac{L_{-\ve b}}{\gamma} = \can \Theta'\frac{L_{-\ve b}}{\gamma} = \Theta'.
\end{multline}
Since $O_3 = O_B\Delta$ where $O_B \in SO(3)$ and $\Delta$ is either the identity matrix or $\diag(1,-1,1)$ (the action of the transpose on Bob in the Pauli basis), we have the desired local unitary.

To remove all ambiguity we need to display the additional information in $O_3$ on the ellipsoid, and this is possible if we depict (ideally in the centre of the ellipsoid) the image of the three principal axes under $O_3$. Alternatively we could use a colouring scheme such as that proposed in \cite{Altepeter}. Together with the steering ellipsoid and Bloch vectors this gives a faithful representation any two qubit state using three dimensional objects. But since the ambiguity in $O_3$ is merely a choice of local basis for Bob, all correlation properties can be found solely from the steering ellipsoid and Bloch vectors, and some properties (such as separability) are fixed by the ellipsoid alone.

\section{Proof of the complete steering theorem}\label{sec_compthm}

We present an extended formulation of the theorem:

\textit{Theorem} Consider some non-product two-qubit state $\Theta = \begin{pmatrix}1 & \ve b^T\\\ve a & T\end{pmatrix}$ with ellipsoids $\elip_A$ and $\elip_B$. The following are equivalent:
\begin{enumerate}
\item Complete steering: for all convex decompositions of $\ve a$ into states in $\elip_A$, there exists a POVM for Bob that steers Alice to it.\label{fullsteer}
\item Any surface steering: there exists a convex decomposition of $\ve a$ into states on the surface of $\elip_A$ with a POVM for Bob that steers Alice to it.\label{surfsteer}
\item Alice's Bloch vector lies on the surface of her ellipsoid scaled down by $b$.\label{blochcond}
\item The affine span of $\elip_B$ contains the maximally mixed state.\label{spancond}
\item $\begin{pmatrix}1 \\ \ve 0\end{pmatrix} \in \mathrm{range}(\Theta^T)$.\label{rangecond}
\item $\begin{pmatrix}1 \\ \ve 0\end{pmatrix} \in \ker(\Theta)^\perp$.\label{kercond}
\end{enumerate}


\begin{proof} \underline{Preliminaries}  From section \ref{sec_SteerEll_Deriv}: $\Theta = \can \Theta \frac{L_{-\ve b}}{\gamma}$ and so \ref{kercond} is equivalent to $\begin{pmatrix}1 \\ \ve b\end{pmatrix} \in \ker(\can \Theta)^\perp$. Inspecting equation~\eqref{thetapform} we see that any vector in $\ker(\can \Theta)$ is of the form $\begin{pmatrix}0 \\ \ve r\end{pmatrix}$ with $\ve r \in \ker(\can T)$ and so \ref{kercond} is equivalent to $\ve b \in \ker(\can T)^\perp$.

$\elip_A$ are the points that can be written $\can {\ve a} + \can{T}\ve x$ where $x \le 1$. Therefore the surface of $\elip_A$ are the points that can be written $\can {\ve a} + \can T \ve x$ where $x = 1$ and $\ve x \in \ker(\can T)^\perp$. Hence the scaled down surface is $\can {\ve a} + \can T\ve x$ where $x = b$ and $\ve x \in \ker(\can T)^\perp$.

From $\Theta =\can \Theta \frac{L_{-\ve b}}{\gamma}$ we calculate that $\ve a = \can{\ve a} + \can T \ve b$.

\underline{\ref{fullsteer} $\implies$ \ref{surfsteer}:} Trivial.

\underline{\ref{surfsteer} $\implies$ \ref{kercond}:} Let $\ve y_i$ on the surface of $\elip_A$ form a convex decomposition  $\sum_i p_i \ve y_i = \ve a$. Since they are on the surface, we have $\ve y_i = \can{\ve a} + \can T \ve x_i$ where $x_i=1$ and $\ve x_i \in \ker(\can T)^\perp$. Suppose we also have $\ve y_i = \can{\ve a} + \can T \ve x_i'$ with $x_i' \leq 1$. Then $\ve x_i - \ve x_i' \in \ker(\can T)$, and the only way that the difference between two vectors can be perpendicular to the longer one is if they are equal. Therefore $2p_i\begin{pmatrix}1 \\ \ve x_i\end{pmatrix}$ is the unique element of the forward light cone that $\frac12 \can \Theta$ maps to $p_i\begin{pmatrix}1 \\ \ve y_i\end{pmatrix}$, and therefore $\gamma L_{\ve b}$ times these form the only possible POVM elements for Bob. But to be a valid POVM, they must sum to the identity $\begin{pmatrix}2 \\ \ve 0\end{pmatrix}$, i.e. $\sum_i 2 p_i \begin{pmatrix}1 \\ \ve x_i\end{pmatrix} = \frac{L_{-\ve b}}{\gamma}\begin{pmatrix}2 \\ \ve 0\end{pmatrix} = 2\begin{pmatrix}1 \\ \ve b\end{pmatrix}$. Since the $\ve x_i \in \ker(\can T)^\perp$, this implies $\ve b \in \ker(\can T)^\perp$ which is equivalent to \ref{kercond}.

\underline{\ref{kercond} $\implies$ \ref{fullsteer}:} Let $\ve y_i \in \elip_A$ form a convex decomposition  $\sum_i p_i \ve y_i = \ve a$. Since $\ve y_i \in \elip_A$ we have $\ve y_i = \can {\ve a} + \can{T}\ve x_i$ where $x_i \leq 1$. Write $\ve x_i = \ve k_i + \ve c_i$ where $\ve k_i \in \ker(\can T)$ and $\ve c_i \in \ker(\can T)^\perp$. This implies $c_i \leq x_i \leq 1$ and $\ve y_i = \can{\ve a} + \can{T}\ve c_i$. So $2 p_i\begin{pmatrix}1 \\ \ve c_i\end{pmatrix}$ are in the forward light cone and map to $p_i\begin{pmatrix}1 \\ \ve y_i\end{pmatrix}$ under $\frac12 \can \Theta$. Hence $\gamma L_{\ve b}$ times these are in the forward light cone and map to $p_i\begin{pmatrix}1 \\ \ve y_i\end{pmatrix}$ under $\frac12 \Theta$. Since $\sum_i p_i \ve y_i = \ve a = \can{\ve a} + \can T\ve b$ we have $\can T\sum_i p_i \ve c_i = \can T\ve b$. By construction $\ve c_i \in \ker(\can T)^\perp$ and by assumption $\ve b \in \ker(\can T)^\perp$, and so this implies $\sum_i p_i\ve c_i = \ve b$. Then $\sum_i \gamma L_{\ve b} 2 p_i \begin{pmatrix}1 \\ \ve c_i\end{pmatrix} = 2\gamma L_{\ve b} \begin{pmatrix}1 \\ \ve b\end{pmatrix} = \begin{pmatrix}2 \\ \ve 0\end{pmatrix}$ so we have a valid POVM.

\underline{\ref{kercond} $\implies$ \ref{blochcond}:} Immediate from form of scaled down $\elip_A$ and $\ve a$ in preliminaries.

\underline{\ref{blochcond} $\implies$ \ref{kercond}:} If $\ve a$ is on the scaled down surface then $\can{\ve a} + \can T\ve x = \can{\ve a} + \can T \ve b$ where $x = b$ and $\ve x \in \ker(\can T)^\perp$. Hence $\ve x - \ve b \in \ker(\can T)$. The only way the difference between two vectors of the same length can be perpendicular to one of them is if they are the same, and so \ref{kercond} follows.

\underline{\ref{spancond} $\implies$ \ref{rangecond}:} Suppose $\sum q_i \ve v_i = 0$, $\sum q_i = 1$ (but $q$ can be negative) with $\ve v_i \in \elip_B$. Recalling that swapping parties sends $\Theta \to \Theta^T$ we see that there exists a POVM element $U_i$ in the forward light-cone with $\begin{pmatrix}1 \\ \ve v_i\end{pmatrix} = \frac12 \Theta^T U_i$. But then $\Theta^T \frac12 \sum_i q_i U_i = \sum_i q_i \frac12 \Theta^T U_i = \sum_i q_i \begin{pmatrix}1 \\ \ve v_i\end{pmatrix} = \begin{pmatrix}1 \\ \ve 0\end{pmatrix}$.

\underline{\ref{rangecond} $\implies$ \ref{spancond}:} Suppose there exists a 4-vector $U$ with $\frac12 \Theta^T U = \begin{pmatrix}1 \\ \ve 0\end{pmatrix}$. If $U = \begin{pmatrix}t \\ \ve u\end{pmatrix}$ is in the forward light cone (i.e. $t \geq u$) then $\elip_B$ itself contains the maximally mixed state and we are done. Otherwise, notice that $U_1 = \begin{pmatrix}u - t \\ 0\end{pmatrix}$ and $U_2 = \begin{pmatrix}u \\ \ve u\end{pmatrix}$ are in the forward light cone. Writing $\frac12 \Theta^T U_i$ as $q_i \begin{pmatrix} 1 \\ \ve v_i\end{pmatrix}$ we have $\ve v_i \in \elip_B$. Noting that $U = \sum_i U_i$ we have $\sum_i \frac12 \Theta^T U_i = \begin{pmatrix}1 \\ \ve 0\end{pmatrix}$, in other words $\sum_i q_i = 1$ and $\sum_i q_i \ve v_i = \ve 0$.

\underline{\ref{rangecond} $\iff$ \ref{kercond}:} For any matrix $A$, $\range(A^T) = \ker(A)^\perp$ is a theorem of linear algebra, which follows straightforwardly from the singular value decomposition.
\end{proof}


\section{\label{Tet_proof}Steering ellipsoids in a tetrahedron in the Bloch sphere correspond to separable states}

We prove that for any $\elip_A$ inside a tetrahedron inside Alice's Bloch sphere, there is a separable state with $\rho_B = \frac12\mathbb{I}$ and $\elip_A$ is steering ellipsoid for qubit A. For the sake of brevity we drop the label A for the ellipsoid so that $\elip_A \equiv \elip$. We present the proofs for each possible dimension of $\elip$ separately, although each one is basically a slightly more involved version of the previous one. Note that in the 0 and 1 dimensional cases the requirement to fit inside a tetrahedron is trivially satisfied by any $\elip$ inside the Bloch sphere.

This result suffices to show that any state with an ellipsoid that fits inside a tetrahedron is separable by the following argument. Suppose $\rho$ has an ellipsoid for Alice (which can be degenerate) that fits inside the tetrahedron. If $b = 1$ then $\rho$ is a product state and we are done. Otherwise, apply a SLOCC operator to Bob and obtain $\can\rho$ with $\ve b = 0$, recalling that SLOCC operators cannot change a state from being entangled to separable. This will leave Alice's ellipsoid $\elip$ unchanged whilst moving her reduced state to the centre of her ellipsoid. Since by the above statement there exists a separable state with the correct ellipsoid at A and reduced states, $\can\rho$ must equal this separable state up to a choice of basis for Bob (see section \ref{sec_RecRho}), and hence must itself be separable,

In fact the separable states constructed below use a number of product states equal to the dimension of the ellipsoid plus one. Since the SLOCC operator and choice of basis for Bob do not affect the number of product states in a decomposition, we furthermore have that $\rho$ can be built using that number of product states.

\subsection{0-dimensional}
If the steering ellipsoid is a single point $\ve r$ then simply take $\rho_A$ with Bloch vector $\ve r$ and let $\rho = \rho_A \otimes \frac12 \mathbb{I}$.

\subsection{1-dimensional}
Suppose $\elip$ is a line segment from $\ve r_0$ to $\ve r_1$. Take $\rho_i$ with Bloch vectors $\ve r_i$ and let $\rho = \frac12 \sum_i \rho_i \otimes \ket{i}\bra{i}$.

\subsection{2-dimensional}
If an ellipse fits inside a tetrahedron in the Bloch sphere, it also fits inside a triangle in the Bloch sphere \cite{Mihai}. Therefore, suppose an ellipse $\elip$ fits within a triangle in the Bloch sphere whose vertices are $\{ \ve r_0, \ve r_1, \ve r_2 \}$. Without loss of generality we can take the ellipse to be tangent to each edge of the triangle, at points $\{ \ve s_i \}$ where $\ve s_i$ is on the face opposite to $\ve r_i$. Denote the centre of the ellipse by $\ve c$. Clearly there exists unique $p_i \geq 0$ such that $\sum_i p_i \ve r_i = \ve c$ and $\sum_i p_i = 1$.

By the definition of an ellipse, there is an invertible affine transformation $\A$ that maps $\elip$ to the unit circle in the $(x,z)$-plane, centred at the origin. Let $\rho_i$ have Bloch vectors $\ve r_i$ and$\ket{\psi_i}$ be such that the Bloch vector of $\ket{\psi_i}\bra{\psi_i}$ is $-\A(\ve s_i)$. We claim that the (manifestly separable) state
\begin{equation}
  \rho = \sum_i p_i \rho_i \otimes \ket{\psi_i}\bra{\psi_i}
\end{equation}
has $\rho_B = \frac{1}{2}\I$ and that Alice's steering ellipsoid for this state is $\elip$. To prove the first part, notice that the Bloch vector of $\rho_B$ is $-\sum_i p_i \A(\ve s_i)$. Since $\A$ is affine, the unit circle will be tangent to the triangle with vertices $\{\A(\ve r_i)\}$ at the points $\{\A(\ve s_i)\}$, and $\sum_i p_i \A(\ve r_i) = \A(\ve c) = \ve 0$. Hence it suffices to prove
\begin{lem} Suppose the triangle with vertices $\{ \ve v_i \}$ contains the unit circle centred at the origin, and the circle is tangent to each edge of the triangle at the points $\{ \ve t_i \}$ (where $\ve t_i$ is on the edge opposite $\ve v_i$). Fix $p_i$ by the requirements that $\sum_i p_i \ve v_i = \ve 0$ and $\sum_i p_i = 1$. Then $\sum_i p_i \ve t_i = \ve 0$.
\label{lem1}
\end{lem}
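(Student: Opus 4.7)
The hypothesis that the unit circle sits inside the triangle tangent to all three edges says exactly that the origin is the incenter of the triangle, with inradius $1$. I would first pin down the coefficients $p_i$ explicitly using the classical fact that the incenter's barycentric coordinates are proportional to the opposite edge lengths. Writing $a_i := |\ve v_j - \ve v_k|$ for $\{i,j,k\}=\{1,2,3\}$ and $s := \tfrac12(a_1+a_2+a_3)$, the unique solution of $\sum_i p_i \ve v_i = \ve 0$ and $\sum_i p_i = 1$ (unique because three non-collinear vertices form an affine basis of the plane) must therefore be $p_i = a_i/(2s)$. The formula itself drops out by comparing areas: the three subtriangles $(\ve 0, \ve v_j, \ve v_k)$ have areas $\tfrac12 a_i$, and these are exactly the unnormalized barycentric weights of the origin.

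Next, I would describe $\ve t_i$ cleanly. Because the edge opposite $\ve v_i$ is tangent to the unit circle at $\ve t_i$ and the origin lies strictly inside the triangle, $\ve t_i$ is the foot of the perpendicular from $\ve 0$ to that edge; equivalently, $\ve t_i$ is the outward unit normal to the edge. Introducing the oriented edge vector $\ve e_i := \ve v_k - \ve v_j$, this means $\ve t_i = J \ve e_i / a_i$, where $J$ is a single fixed $\pm\pi/2$ rotation of the plane — the same $J$ for all three edges, its sign determined by the consistent orientation in which the boundary is traversed.

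The two pieces then combine in one line:
\begin{equation}
\sum_i p_i \ve t_i \;=\; \frac{1}{2s}\sum_i a_i \cdot \frac{J \ve e_i}{a_i} \;=\; \frac{J}{2s}\sum_i \ve e_i \;=\; \ve 0,
\end{equation}
using in the last equality the elementary identity that the three consistently oriented edge vectors of a triangle sum to zero. The main conceptual step is the first one — recognizing that the $p_i$ are forced to be the incenter's barycentric coordinates — after which the rotation trick makes the factors $a_i$ cancel and the claim falls out. I don't anticipate further obstacles: no case analysis on the shape of the triangle is needed, since every step is uniform in the three indices.
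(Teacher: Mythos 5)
Your proof is correct, and it shares the paper's first step but diverges on the key second step. Like the paper, you identify $p_i$ with the incenter's barycentric coordinates $L_i/L$ (your $a_i/(2s)$) via the subtriangle areas with perpendicular height equal to the inradius $1$. From there the paper proceeds combinatorially: it uses the equal-tangent-lengths fact to write $L_0 = M_1 + M_2$ (etc.), reads off the barycentric coordinates of each tangent point $\ve t_i$ explicitly, and then sums, using that barycentric coordinates respect convex combinations. You instead observe that $\ve t_i$ is the outward unit normal of its edge, hence $J\ve e_i/a_i$ for a single quarter-turn $J$, so the weights $a_i$ cancel and the claim reduces to $\sum_i \ve e_i = \ve 0$. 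Your route is shorter and avoids computing the tangent points' barycentric coordinates altogether, at the cost of being phrased in a specifically planar way (the rotation $J$); the paper's tangent-length argument is chosen because it transcribes almost verbatim to the three-dimensional analogue (Lemma~2, with tangent lengths replaced by congruent sub-face areas), which the separability proof also needs. Your argument does generalize too, but via a different identity --- $\sum_i A_i \hat{\ve n}_i = \ve 0$ for the outward area-weighted normals of a closed polyhedron --- rather than by a rotation.
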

\begin{proof}
We use $\ve x$ to represent points on or within the tetrahedron using normalized barycentric co-ordinates $(x_0, x_1, x_2)$ where $\sum_i x_i = 1$ and $\vec x = \sum_i x_i \ve v_i$. Let $A_0$ be the area of the triangle with vertices $\{\ve x, \ve v_1, \ve v_2\}$, $A_1$ be the area of the triangle with vertices $\{\ve v_0, \ve x, \ve v_2\}$ and similarly for $A_2$. Let $A$ be the area of the original triangle (notice $A = \sum_i A_i$). Then $x_i = A_i / A$. By definition the barycentric co-ordinates of the origin are $(p_0, p_1, p_2)$.

Let $L_i$ be the length of the edge opposite $\ve v_i$, and let $L = \sum_i L_i$. By using that the area of a triangle = $\frac12$ (base) $\times$ (perpendicular height)
and noting that by the tangency assumption the relevant triangles have a perpendicular height of 1, we obtain that $p_i = L_i / L$.

\begin{figure}
\includegraphics{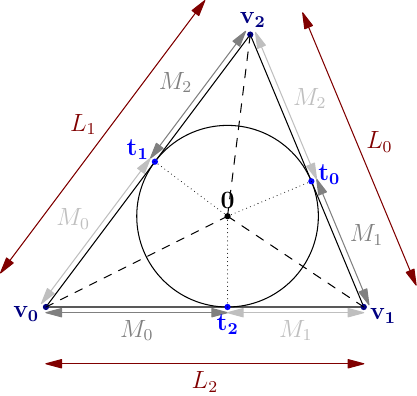}
\caption{The various quantities used in proving Lemma~\ref{lem1}. The dashed lines form the three triangles used to show $p_i = L_i/L$, the dotted lines indicate their perpendicular heights (which are equal to the radius of the circle: 1).}
\label{lemfig}
\end{figure}

Let $M_0^{(1)} = \abs{\ve v_0 - \ve t_1}$, $M_0^{(2)} = \abs{\ve v_0 - \ve t_2}$. In fact $M_0^{(1)} = M_0^{(2)}$ because they are both the unique length defined by the requirement of being from a fixed point to a point on the circle such that the line between them is tangent to the sphere, so we can write this length simply as $M_0$. Define the other two $M_i$ by a similar argument. All this is illustrated in Figure~\ref{lemfig}. Notice that
\begin{align}
L_0 &= M_{1} + M_{2},\\
L_1 &= M_{0} + M_{2},\\
L_2 &= M_{0} + M_{1},\\
\end{align}
The barycentric co-ordinates of $\ve t_0, \ve t_1$ and $\ve t_2$ can now be calculated as
\begin{equation}
  (0, M_2, M_1)/L_0,
\end{equation}
\begin{equation}
  (M_2, 0, M_0)/L_1,
\end{equation}
and
\begin{equation}
  (M_1, M_0, 0)/L_2,
\end{equation}
respectively. Using $p_i = L_i / L$ and the fact that barycentric coordinates respect convex combinations the required result is now immediate.
\end{proof}
Suppose Bob projects his qubit onto $\ket{\psi}$ and the orthogonal state. Since $\rho_B = \frac{1}2 \I$ he will obtain each outcome with probability $\frac12$. Therefore if he obtains the $\ket{\psi}$ outcome then Alice's state will be
\begin{eqnarray}
  \rho_{A}(\ket{\psi}) &:=& \frac{\tr_B(\rho(\I \otimes \ket{\psi}\bra{\psi}))}{\tr(\rho_B\ket{\psi}\bra{\psi})}  \\
  &=& \frac{\sum_i p_i \rho_i \abs{\braket{\psi_i}{\psi}}^2}{\frac12} \\
  &=& 2\sum_i p_i \rho_i \abs{\braket{\psi_i}{\psi}}^2
\end{eqnarray}
Recalling that the Bloch vector of $\ket{\psi_i}\bra{\psi_i}$ is $-\A(\ve s_i)$, then if $\ket{\psi}\bra{\psi}$ has Bloch vector $\ve r$ then the Bloch vector of $\rho_A(\ket{\psi})$ will be
\begin{equation}
f(\ve r) := 2 \sum_i p_i \ve r_i \frac{1 - \ve r \cdot \A(\ve s_i)}2 = \sum_i p_i \ve r_i \left(1 - \ve r \cdot \A(\ve s_i)\right).
\end{equation}
Let us extend this expression to all $\ve r$ to define an affine function $f$. The statement that Alice's steering ellipsoid is $\elip$ is equivalent to the statement that $\elip$ is the image of the unit sphere under $f$. Since all the $\A(\ve s_i)$ are in the $(x,z)$-plane, we have $f\left( (0,1,0) \right) = f(\ve0)$, i.e. we can think of $f$ as first projecting onto the $(x,z)$-plane and then applying some affine transformation. The image of the unit sphere under that projection is the unit disc, and so it suffices to check that $\E$ is the image of the unit circle under $f$. Define $g(\ve r) = \A(f(\ve r))$. Since $\A$ is invertible and maps $\elip$ to unit circle it suffices to prove that $g$ is the identity on the $(x,z)$-plane. Since $g$ is the composition of two affine functions it is also affine. By the definition of the $p_i$, $g(\ve 0) = \A(\ve c) = \ve 0$ so $g$ is in fact linear. Hence it suffices to check that $g(\ve u_j) = \ve u_j$ for some spanning set of vectors $\{\ve u_j\}$. Since the triangle cannot be degenerate, its vertex set $\{ \ve r_j \}$ span some plane. Since $\A$ is invertible, $\{ \A(\ve r_j) \}$ must span the $(x,z)$ plane. For $i \neq j$, $\{\ve 0, \A(\ve s_i), \A(\ve r_j)\}$ form a right-angle triangle, and $\abs{\A(\ve s_i)} = 1$. Therefore $\A(\ve r_j) \cdot \A(\ve s_i) = 1$ whenever $i \neq j$. But $\sum_i p_i(1 - \ve r \cdot \A(\ve s_i)) = \sum_i p_i - \ve r\cdot\left(\sum_i p_i \A(\ve s_i)\right) = 1 - \ve r \cdot \ve 0 = 1$ (the penultimate equality is from Lemma~\ref{lem1}). Hence $p_i(1 - \A(\ve r_j)\cdot\A(\ve s_i)) = \delta_{ij}$ and we are done.

\subsection{3-dimensional}

Suppose an ellipsoid $\elip$ fits within a tetrahedron in the Bloch sphere whose vertices are $\{ \ve r_0, \ve r_1, \ve r_2, \ve r_3 \}$. Without loss of generality we can take the ellipsoid to be tangent to each face of the tetrahedron, at points $\{ \ve s_i \}$ where $\ve s_i$ is on the face opposite to $\ve r_i$. Denote the centre of the ellipsoid by $\ve c$. Clearly there exists unique $p_i \geq 0$ such that $\sum_i p_i \ve r_i = \ve c$ and $\sum_i p_i = 1$.

By the definition of an ellipsoid, there is an invertible affine transformation $\A$ that maps $\elip$ to the unit sphere centred at the origin. Let $\rho_i$ have Bloch vectors $\ve r_i$ and $\ket{\psi_i}$ be such that the Bloch vector of $\ket{\psi_i}\bra{\psi_i}$ is $-\A(\ve s_i)$. We claim that the (manifestly separable) state
\begin{equation}
  \rho = \sum_i p_i \rho_i \otimes \ket{\psi_i}\bra{\psi_i}
\end{equation}
has $\rho_B = \frac{1}{2}\I$ and that Alice's steering ellipsoid for this state is $\elip$. To prove the first part, notice that the Bloch vector of $\rho_B$ is $-\sum_i p_i \A(\ve s_i)$. Since $\A$ is affine, the unit sphere will be tangent to the tetrahedron with vertices $\{\A(\ve r_i)\}$ at the points $\{\A(\ve s_i)\}$, and $\sum_i p_i \A(\ve r_i) = \A(\ve c) = \ve 0$. Hence it suffices to prove the following 3-dimensional analogue to Lemma~\ref{lem1}:
\begin{lem} Suppose the tetrahedron with vertices $\{ \ve v_i \}$ contains the unit sphere centred at the origin, and the sphere is tangent each face of the tetrahedron at the points $\{ \ve t_i \}$ (where $\ve t_i$ is on the face opposite $\ve v_i$). Fix $p_i$ by the requirements that $\sum_i p_i \ve v_i = \ve 0$ and $\sum_i p_i = 1$. Then $\sum_i p_i \ve t_i = \ve 0$.
\label{lem2}
\end{lem}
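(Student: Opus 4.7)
The plan is to recognize the sphere as the inscribed sphere of the tetrahedron and to combine two classical facts: the barycentric representation of the incenter via opposite face areas, and the vanishing of the weighted outward normals of a closed polyhedron. The 2D argument via equal tangent segment lengths does not lift directly, since in 3D each $\ve t_i$ lies in the interior of its face and is not constrained by a simple ``tangents from a vertex are equal'' identity; the cleanest route is instead to identify the $\ve t_i$ with outward unit normals.

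First I would observe that, since the sphere of radius $1$ at the origin is tangent to every face, it is the inscribed sphere and the origin is the incenter. Connecting the origin to each face $F_i$ (the face opposite $\ve v_i$) gives a decomposition of the tetrahedron into four sub-tetrahedra, each with apex $\ve 0$, base $F_i$, and perpendicular height equal to the inradius $r=1$; so the sub-tetrahedron opposite $\ve v_i$ has volume $A_i/3$, where $A_i$ is the area of $F_i$. Since barycentric coordinates are volume ratios, the condition $\sum_i p_i \ve v_i = \ve 0$, $\sum_i p_i = 1$ forces
\begin{equation}
p_i = \frac{A_i}{\sum_j A_j}.
\end{equation}

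Second, I would note that $\ve t_i$, being the unique point of $F_i$ at distance $1$ from the origin, is the foot of the perpendicular from the origin to $F_i$. Writing $F_i$ as the affine plane $\{\ve x : \vec n_i \cdot \ve x = 1\}$ with $\vec n_i$ the outward unit normal (the sign fixed by the origin lying on the interior side), we get $\ve t_i = \vec n_i$. Hence
\begin{equation}
\sum_i p_i \ve t_i \;=\; \frac{1}{\sum_j A_j}\sum_i A_i \vec n_i,
\end{equation}
and the proof reduces to the identity $\sum_i A_i \vec n_i = \ve 0$, valid for any closed polyhedron.

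The remaining step is to justify this area-vector identity; this is standard (it follows from applying the divergence theorem to a constant vector field, or elementarily from the fact that signed projected areas onto any coordinate plane cancel, or from writing each $A_i \vec n_i$ as $\tfrac12(\ve v_j - \ve v_k)\times(\ve v_l - \ve v_k)$ for the appropriate triple and checking the four terms telescope). The main obstacle, as noted, is conceptual rather than technical: recognising that the natural 3D analogue of the 2D tangent-length bookkeeping is the incenter/area-vector pair, after which the computation is immediate and the proof is considerably shorter than its 2D counterpart.
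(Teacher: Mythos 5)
Your proof is correct, and the second half takes a genuinely different route from the paper's. Both arguments begin identically: decompose the tetrahedron into four sub-tetrahedra with apex at the origin, use the tangency condition to get perpendicular height $1$, and conclude $p_i = A_i/\sum_j A_j$. From there the paper stays synthetic: it shows the tangent points satisfy a three-dimensional analogue of the equal-tangent-lengths bookkeeping (the triangles $\{\ve v_2,\ve v_3,\ve t_0\}$ and $\{\ve v_2,\ve v_3,\ve t_1\}$ are congruent because tangent segments from a fixed external point to the sphere have equal length), decomposes each face area as $A_0 = A_{12}+A_{13}+A_{23}$ etc., reads off the barycentric coordinates of each $\ve t_i$ explicitly, and finishes by linearity of barycentric coordinates. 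So your remark that the tangent-length argument ``does not lift directly'' is not quite right --- the paper does lift it, at the cost of some combinatorial bookkeeping with the six quantities $A_{ij}$. Your alternative --- observing that $\ve t_i$ is precisely the outward unit normal $\vec n_i$ of the face at unit distance, so that $\sum_i p_i \ve t_i \propto \sum_i A_i \vec n_i$, which vanishes by the closed-surface area-vector identity --- is shorter, cleaner, and generalises immediately to any polytope circumscribed about the unit sphere in any dimension (and in particular would also handle the paper's two-dimensional Lemma~1 with perimeter-weighted edge normals). What the paper's version buys in exchange is that it is entirely elementary and self-contained, computing the barycentric coordinates of the tangent points explicitly rather than invoking the divergence theorem or Minkowski's identity; either route is a complete proof.
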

\begin{proof}
We use $\ve x$ to represent points on or within the tetrahedron using normalized barycentric co-ordinates $(x_0, x_1, x_2, x_3)$ where $\sum_i x_i = 1$ and $\vec x = \sum_i x_i \ve v_i$. Let $V_0$ be the volume of the tetrahedron with vertices $\{\ve x, \ve v_1, \ve v_2, \ve v_3\}$, $V_1$ be the volume of the tetrahedron with vertices $\{\ve v_0, \ve x, \ve v_2, \ve v_3\}$ and so on. Let $V$ be the volume of the original tetrahedron (notice $V = \sum_i V_i$). Then $x_i = V_i / V$. By definition the barycentric co-ordinates of the origin are $(p_0, p_1, p_2, p_3)$.

Let $A_i$ be the area of the face opposite $\ve v_i$, and let $A = \sum_i A_i$. By using that the volume of a tetrahedron = $\frac13$ (area of base) $\times$ (perpendicular height)
and noting that by the tangency assumption the relevant tetrahedra have a perpendicular height of 1, we obtain that $p_i = A_i / A$.

Let $A_{23}^{(0)}$ be the area of the triangle with vertices $\{\ve v_2, \ve v_3, \ve t_0\}$. Let $A_{23}^{(1)}$ be the area of the triangle with vertices $\{\ve v_2, \ve v_3, \ve t_1\}$. Now we have that $\abs{\ve v_2 - \ve t_0} = \abs{\ve v_2 - \ve t_1}$ because they are both the unique length defined by the requirement of being from a fixed point to a point on the sphere such that the line between them is tangent to the sphere. Similarly $\abs{\ve v_3 - \ve t_0} = \abs{\ve v_3 - \ve t_1}$. Hence the two triangles are congruent and we can simply write their areas as $A_{23}$. Define the other five $A_{ij}$ by a similar argument. Notice that
\begin{align}
A_0 &= A_{12} + A_{13} + A_{23},\\
A_1 &= A_{02} + A_{03} + A_{23},\\
A_2 &= A_{01} + A_{03} + A_{13},\\
A_3 &= A_{01} + A_{02} + A_{12}.
\end{align}
The barycentric co-ordinates of $\ve t_0, \ve t_1, \ve t_2$ and $\ve t_3$ can now be calculated as
\begin{equation}
  (0, A_{23}, A_{13}, A_{12})/A_0,
\end{equation}
\begin{equation}
  (A_{23}, 0, A_{03}, A_{02})/A_1,
\end{equation}
\begin{equation}
  (A_{13}, A_{03}, 0, A_{01})/A_2,
\end{equation}
and
\begin{equation}
  (A_{12}, A_{02}, A_{01}, 0)/A_3
\end{equation}
respectively. Using $p_i = A_i / A$ and the fact that barycentric coordinates respect convex combinations the required result is now immediate.
\end{proof}
As in the 2-dimensional case we find that if Bob projects onto the state with Bloch vector $\ve r$ then Alice's Bloch vector is
\begin{equation}
f(\ve r) := \sum_i p_i \ve r_i \left(1 - \ve r \cdot \A(\ve s_i)\right).
\end{equation}
Let us extend this expression to all $\ve r$ to define an affine function $f$. The statement that Alice's steering ellipsoid is $\elip$ is equivalent to the statement that $\elip$ is the image of the unit sphere under $f$. Define $g(\ve r) = \A(f(\ve r))$. Since $\A$ is invertible and maps $\elip$ to unit sphere it suffices to prove that $g$ is the identity. Since $g$ is the composition of two affine functions it is also affine. By the definition of the $p_i$, $g(\ve 0) = \A(\ve c) = \ve 0$ so $g$ is in fact linear. Hence it suffices to check that $g(\ve u_j) = \ve u_j$ for some spanning set of vectors $\{\ve u_j\}$. Since the tetrahedron cannot be degenerate, its vertex set $\{ \ve r_j \}$ must be spanning. Since $\A$ is invertible, $\{ \A(\ve r_j) \}$ is also spanning. As in the 2-dimensional case, for $i \neq j$, $\{\ve 0, \A(\ve s_i), \A(\ve r_j)\}$ form a right-angle triangle, and $\abs{\A(\ve s_i)} = 1$. Therefore $\A(\ve r_j) \cdot \A(\ve s_i) = 1$ whenever $i \neq j$. But $\sum_i p_i(1 - \ve r \cdot \A(\ve s_i)) = \sum_i p_i - \ve r\cdot\left(\sum_i p_i \A(\ve s_i)\right) = 1 - \ve r \cdot \ve 0 = 1$ (the penultimate equality is from Lemma~\ref{lem2}). Hence $p_i(1 - \A(\ve r_j)\cdot\A(\ve s_i)) = \delta_{ij}$ and we are done.

\section{Discord and steering ellipsoids}\label{sec_discord}

Below we outline the condition for zero discord for Alice from either her or Bob's ellipsoid.

\begin{quote}\textit{A state has zero discord for Alice iff her ellipsoid degenerates to a segment of a diameter.}
\end{quote}

The ``only if'' part: A general zero discord state for Alice $\rho = p \ket{e}\bra{e} \otimes \rho_0 + (1-p)\ket{\bar{e}}\bra{\bar{e}} \otimes \rho_1$ has $\langle e |\bar{e} \rangle = 0$ and
\begin{eqnarray}
\ve {a} &=& t \ve{e}\\
\ve {b} &=& \ve {x} \\
T &=& \ve{e y}^T
\end{eqnarray}
where $t = 2p-1$, $\ve{e} = \bra{e} \boldsymbol{\sigma} \ket{e}$ and $\ve {x} = \tr \left[ \left(p \rho_0 + (1-p)\rho_1\right)  \boldsymbol{\sigma} \right]$, $\ve {y} = \tr \left[ \left(p \rho_0 - (1-p)\rho_1\right)  \boldsymbol{\sigma} \right]$ \cite{VlatkoDiscord}.

Alice's steering ellipsoid $\mathcal{E}_A$ has centre
$\ve{c}_A = \left(\frac{t- \ve{x}\cdot\ve{y}}{1-x^2}\right)\ve{e}$
and matrix
$Q_A = s_A^2 \ve {e} \ve{e}^T$ with
\begin{equation}
s_A^2 = \frac{1}{1-x^2} \left[ \left( \ve{y}-t\ve{x} \right)^T \left( \I + \frac{\ve{x}\ve{x}^T}{1-x^2}\right)  \left( \ve{y}-t\ve{x} \right)\right]
\end{equation}
So $\mathcal{E}_A$ is a segment of the diameter.

The ``if'' part: suppose we are given $\mathcal{E}_A$, a segment of the diameter. Denote the states endpoints of the ellipsoid $\rho_0$ and $\rho_1$. Alice's state can always be decomposed as $\rho_A = q \rho_0 + (1-q) \rho_1$. However since all the Bloch vectors of $\rho_A, \rho_0, \rho_1$ are collinear they will eigendecompose into the same pair of orthogonal states, call them $\ket{\psi},\ket{\bar{\psi}}$. Writing $\rho_i = p_i \ket{\psi} \bra{\psi} + (1-p_i)\ket{\bar{\psi}} \bra{\bar{\psi}}$, for $i=0,1$ then $\rho_A = p \ket{\psi} \bra{\psi} + (1-p) \ket{\bar{\psi}} \bra{\bar{\psi}}$ with $p = qp_0 + (1-q)p_1$. Then the joint state
\begin{align}
\rho = p \ket{\psi} \bra{\psi}\otimes \beta_0 + (1-p) \ket{\bar{\psi}} \bra{\bar{\psi}}\otimes\beta_1
\end{align}
is a zero discord state for Alice with the correct $\mathcal{E}_A$ and $\rho_A$  for any mixed states on Bob's side $\beta_0,\beta_1$.

\begin{quote}\textit{There is zero discord for Alice iff Bob's ellipsoid degenerates to a line segment and the length of Alice's Bloch vector is equal to the distance from the centre of Bob's ellipsoid to his Bloch vector divided by the radius of his ellipsoid.}
\end{quote}

The ``only if'' part can easily be checked: it requires $a = \frac{|\boldsymbol{c}_B-\boldsymbol{b}|}{s_B}$. Since, after some algebra,
\begin{eqnarray}
\ve{c}_B &=& \frac{\ve{x} - t \ve{y}}{1-t^2} \\
Q_B &=& \frac{1}{(1-t^2)^2}(\ve{y} - t\ve{x})(\ve{y} - t\ve{x})^T
\end{eqnarray}
then $|\boldsymbol{c}_B-\boldsymbol{b}| = \frac{t |\ve{y} - t\ve{x}|}{1-t^2} = as_B$.

For the ``if'' part, let $\rho_0$ and $\rho_1$ be the endpoints of Bob's ellipsoid, and let the state corresponding to Alice's Bloch vector have eigendecomposition $\rho_A = p_0 \ket{\psi_0}\bra{\psi_0} + p_1 \ket{\psi_1}\bra{\psi_1}$. Then the joint state
\begin{equation}
\rho = p_0  \ket{\psi_0}\bra{\psi_0} \otimes \rho_0 + p_1\ket{\psi_1}\bra{\psi_1}  \otimes  \rho_1
\end{equation}
has zero discord for Alice, the correct Bloch vector for Alice and the correct ellipsoid for Bob. If necessary swapping $\rho_0$ and $\rho_1$, it also has the right Bloch vector for Bob. Bob's ellipsoid $\mathcal{E}_B$ is invariant under local unitaries on Alice's qubit, so Alice and Bob's actual state is therefore equivalent to $\rho$ up to this transformation, which preserves discord.

\section{Volume formula for the steering ellipsoid}\label{sec_volform}

The volume $V$ of any ellipsoid is proportional to the product of its semiaxes: $V=\frac{4\pi}{3}s_1s_2s_3$. Therefore the ellipsoid $\mathcal{E}_A$ has volume $V_A = \frac{4\pi}{3}\sqrt{\det Q_A}$, where $Q_A$ is the matrix whose eigenvalues are the squares of the semiaxes, equation \eqref{QA} in the main text. Working out the determinant of $Q_A$ gives the volume terms of the components of the state $\rho$
\begin{eqnarray}
\label{VolumeRaw}
V_A&=&\frac{4\pi }{3}\frac{\left\vert \det(T- \boldsymbol{a}\boldsymbol{b}^T) \right\vert }{\left( 1-b^{2}\right) ^{2}} \\
&=&\frac{4\pi }{3}\frac{\left\vert \det\Theta\right\vert }{\left( 1-b^{2}\right) ^{2}}
\end{eqnarray}
where in the second line we have used the fact that $\det(T- \boldsymbol{a}\boldsymbol{b}^T) = \mathrm{det}\Theta$.

To express this in terms of the density matrix $\rho$, we use the equation \cite{FrankFilter}
\begin{equation}
\Theta = 2 \Upsilon \rho^R \Upsilon^T
\label{Theta_rho_reln}
\end{equation}
where the unitary matrix $\Upsilon$ is given in equation \eqref{Seagull}
and $R$ denotes a reshuffling operation: if $\rho = \sum_{i,j=0}^1 \rho_{ij;kl}\ket{ij}\bra{kl}$ then $\rho^R=\sum_{i,j=0}^1 \rho_{ik;jl}\ket{ij}\bra{kl}$. We also require a curious relation that holds for any 4$\times$4 or 9$\times$9 matrix $M$:
\begin{equation}
\det M = \det M^{T_B} - \det( M^{T_B})^R.
\label{dets_reln}
\end{equation}
Consider what happens to $\Theta$ as $\rho \rightarrow \rho^{T_B}$. The basis of $\Theta$ is $\{ \sigma_\mu \otimes \sigma_\nu\}$, with $\mu,\nu = 0,1,2,3$. The partial transpose on B changes the sign of $\sigma_2$, i.e. the $T_B$ operation sends the components $\Theta_{\mu 2} \rightarrow - \Theta_{\mu 2} $, this can be expressed as $\Theta \rightarrow \Theta \Omega$, where $\Omega = \diag(1,1,-1,1)$.
Using this, and the inverted version of equation \eqref{Theta_rho_reln}, $\rho^R = \frac{1}{2} \Upsilon^{\dag}\Theta \Upsilon^{*} $, it follows that
\begin{equation}
\left(\rho^{T_B} \right)^R = \frac{1}{2} \Upsilon^{\dag}\Theta \Omega \Upsilon^{*}
\end{equation}
and so
\begin{eqnarray}
\det\left(\rho^{T_B} \right)^R &=& \frac{1}{2^4} (\det \Upsilon^{*})^2 \det\Theta \det\Omega \\
&=& \frac{1}{2^4}  \det\Theta
\end{eqnarray}
since $\det \Upsilon = i$ and $ \det\Omega =-1$.
Inserting this into equation \eqref{dets_reln} with $M=\rho$ we discover the relation between the determinants
\begin{equation}
\det\Theta= 16 (\det \rho^{T_B} -\det \rho)
\label{det_Theta_rho}
\end{equation}

This can now be substituted into the volume equation (\ref{VolumeRaw}) to yield the volume of $\mathcal{E}_A$ in terms of the state $\rho$ as
\begin{equation}
V_A = \frac{64\pi }{3}\frac{\left\vert \det  \rho^{T_B}
-\det  \rho \right\vert }{\left( 1-b^{2}\right) ^{2}}.
\end{equation}

The volume for ellipsoid B $$ V_B \propto \sqrt{\det Q_B} = \frac{\left\vert \det(T^T- \boldsymbol{b}\boldsymbol{a}^T) \right\vert }{\left( 1-a^{2}\right) ^{2}}, $$ but $ \det(T^T- \boldsymbol{b}\boldsymbol{a}^T) =\det \Theta^T = \det\Theta$ hence
$V_B = \frac{(1-b^2)^2}{(1-a^2)^2}V_A$.
\newline

\section{The steering ellipsoid zoo}
In this section we illustrate the main types of ellipsoid.
\subsection{Entangled states}
For every pure entangled state the ellipsoid coincides with the Bloch sphere.
When the state is mixed and entangled, the ellipsoid does not satisfy the tetrahedral condition because, loosely speaking,
the ellipsoid is either \emph{too big} (with volume $V \ge V_\star$, where $V_\star$ is the largest separable volume $4\pi/81$) or \emph{too near} (large $c$) to the surface of the Bloch sphere, see figure \ref{entangled}. Every entangled state is completely steerable.
\begin{figure}[h]
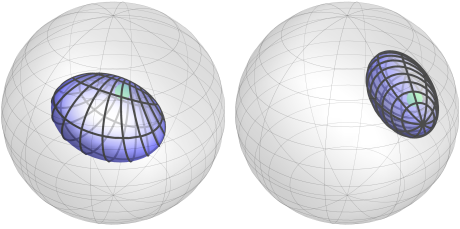

\includegraphics[width=3cm]{entangled-A.png}
\includegraphics[width=3cm]{entangled-B.png}
\caption{A generic entangled state $  \rho$: both ellipsoids $\E_A$ and $\E_B$ are always full rank and neither can be inscribed within a tetrahedron within the Bloch sphere.}
\label{entangled}
\end{figure}

\subsection{Separable states with full-dimensional ellipsoids}

Separable states admit a convex decomposition in terms of product states, and have ``more classical" correlations. Steering is still possible, however the steering ellipsoids necessarily obey the tetrahedral condition, as in figure \ref{separable}.

If the state has a three dimensional $\E_A$ then it has non-zero obesity and non-zero discord, and furthermore, it can be written as a mixture of just four product states. Such states are also completely steerable.
\begin{figure}[h]
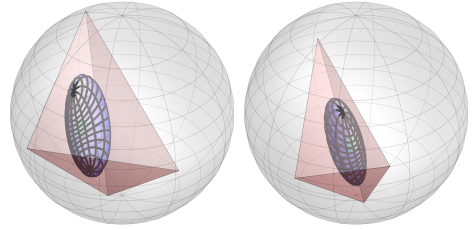

\includegraphics[width=3cm]{separable-A.png}
\includegraphics[width=3cm]{separable-B.png}
\caption{A generic separable state $  \rho$, where both ellipsoids $\E_A$ and $\E_B$ are full rank and fit inside a tetrahedron.}
\label{separable}
\end{figure}

\begin{figure}[h]
\includegraphics[width=4cm]{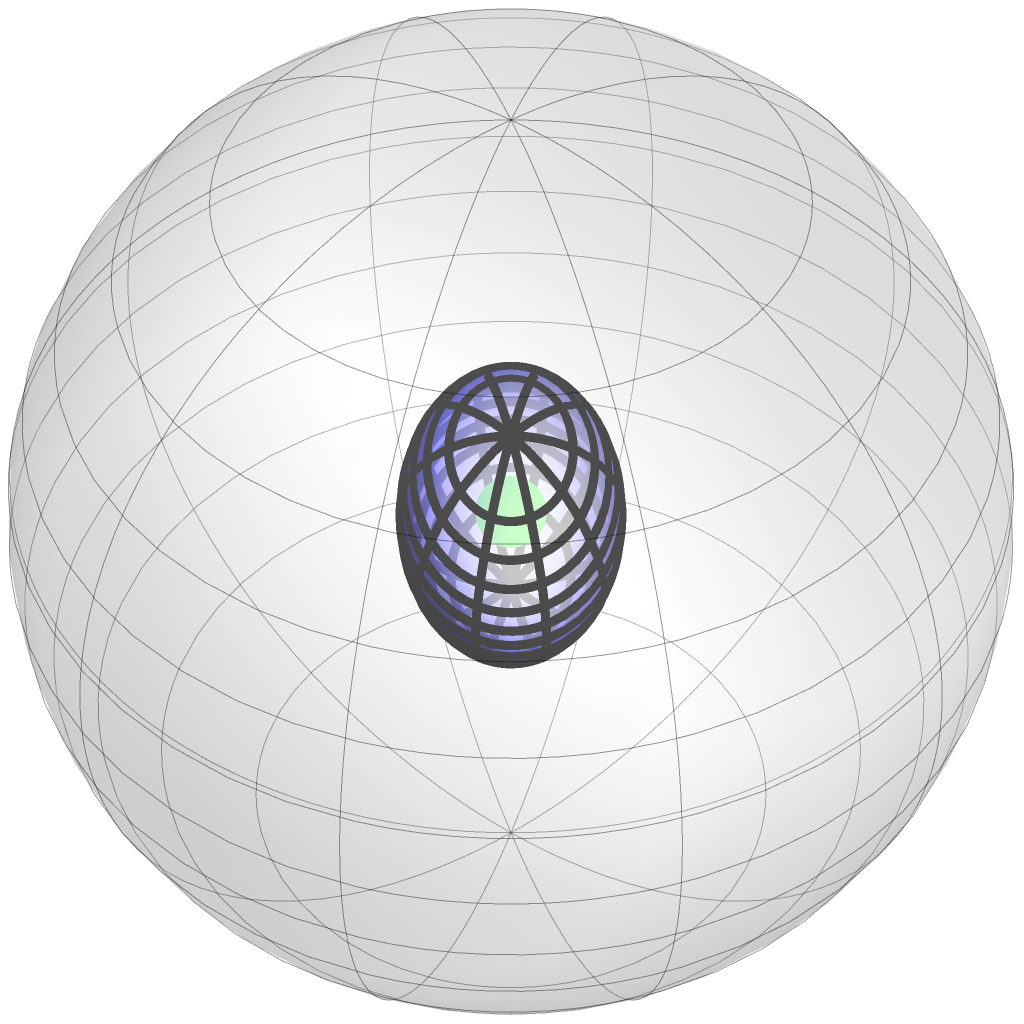}
\caption{A Bell-diagonal state: the ellipsoid is centred at the origin and its semiaxes are given by the three singular values of $T$. The vector of these singular values $\boldsymbol{t} = (t_1,t_2,t_3)$ lives in a tetrahedron with vertices at (1,1,-1),(1,-1,1),(-1,1,1),(-1,-1,-1), and when it is inside an octahedron inside of this tetrahedron, then the state is necessarily separable. This defines the set of ellipsoids that fit inside the nested tetrahedron (these are not the same tetrahedra).}
\label{bell}
\end{figure}
\begin{figure}[h]
\includegraphics[width=3cm]{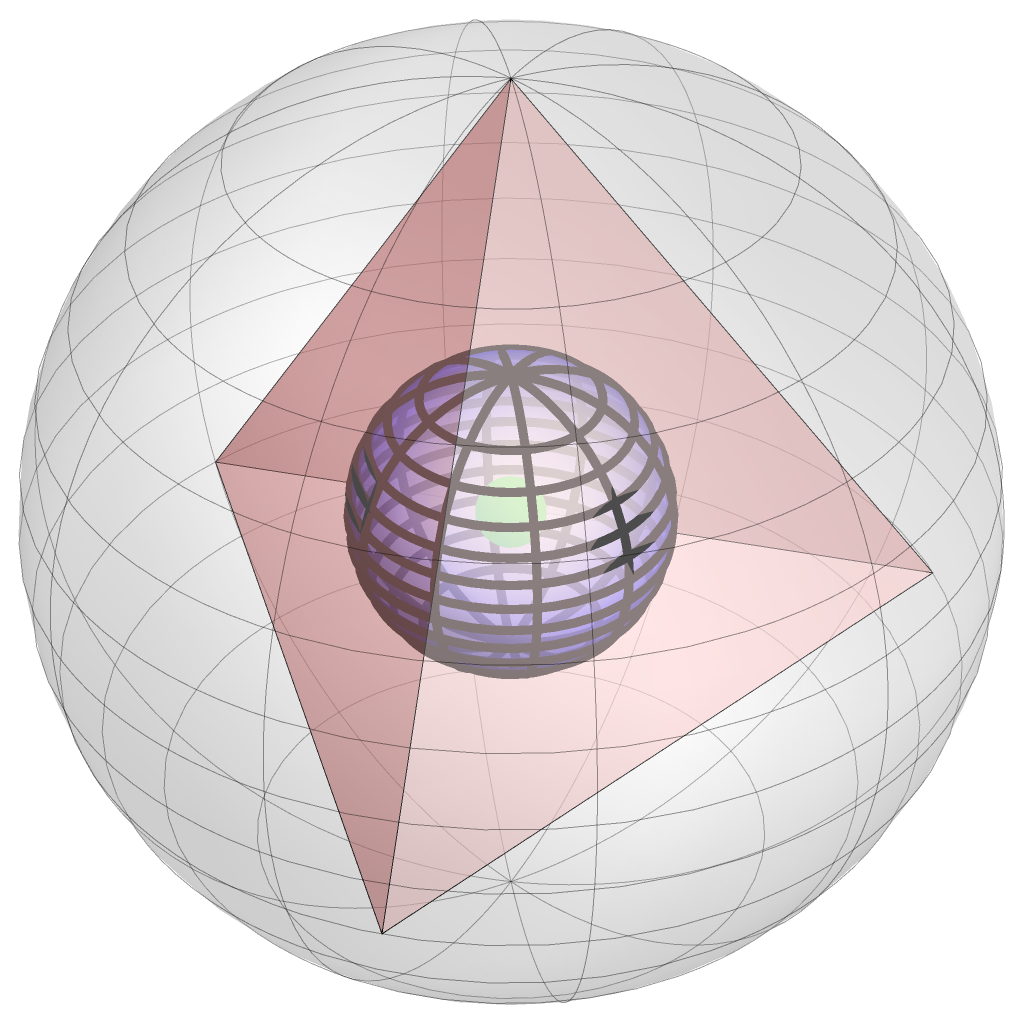}
\caption{A Werner state. The steering ellipsoid $\E_A$ is a sphere centred at the origin. The ellipsoid fits inside a tetrahedron when its radius is less than $\frac{1}{3}$ and thus the state is separable.}
\label{werner}
\end{figure}
\subsection{Steering pancakes}
The set of states that Bob can steer Alice to may become degenerate, and form a \emph{two-dimensional} set. This ``steering pancake" will not only fit inside a tetrahedron, but will fit within a \emph{triangle} that is inscribed within the Bloch sphere as shown in \ref{pancake}.
\begin{figure}[h]
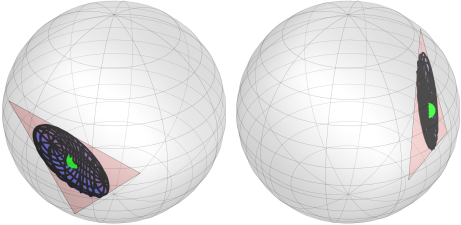

\includegraphics[width=3cm]{pancake-A.png}
\includegraphics[width=3cm]{pancake-B.png}
\caption{A generic separable state $  \rho$, where both ellipsoids $\E_A$ and $\E_B$ are steering pancakes.}
\label{pancake}
\end{figure}
Recall that we have a novel feature for some steering pancakes (and steering needles) of \emph{incomplete} steering. For steering pancakes we have complete steering of qubit A if and only if the affine span of $\E_B$ contains the origin of the Bloch sphere.
\subsection{Steering needles}
The steering can become even more degenerate, and the steering set collapses to a one-dimensional line segment, or ``steering needle". These states include perfectly classical (doubly zero-discord states) with needles being radial (figure \ref{needle}), but also includes non-zero discord states for which either one or both of the steering needles $\E_A$ and $\E_B$ is not radial (figure \ref{needle2}). Being radial is indicated by dashed lines on the figures, which depict diameters.
\begin{figure}[h]
\includegraphics[width=3cm]{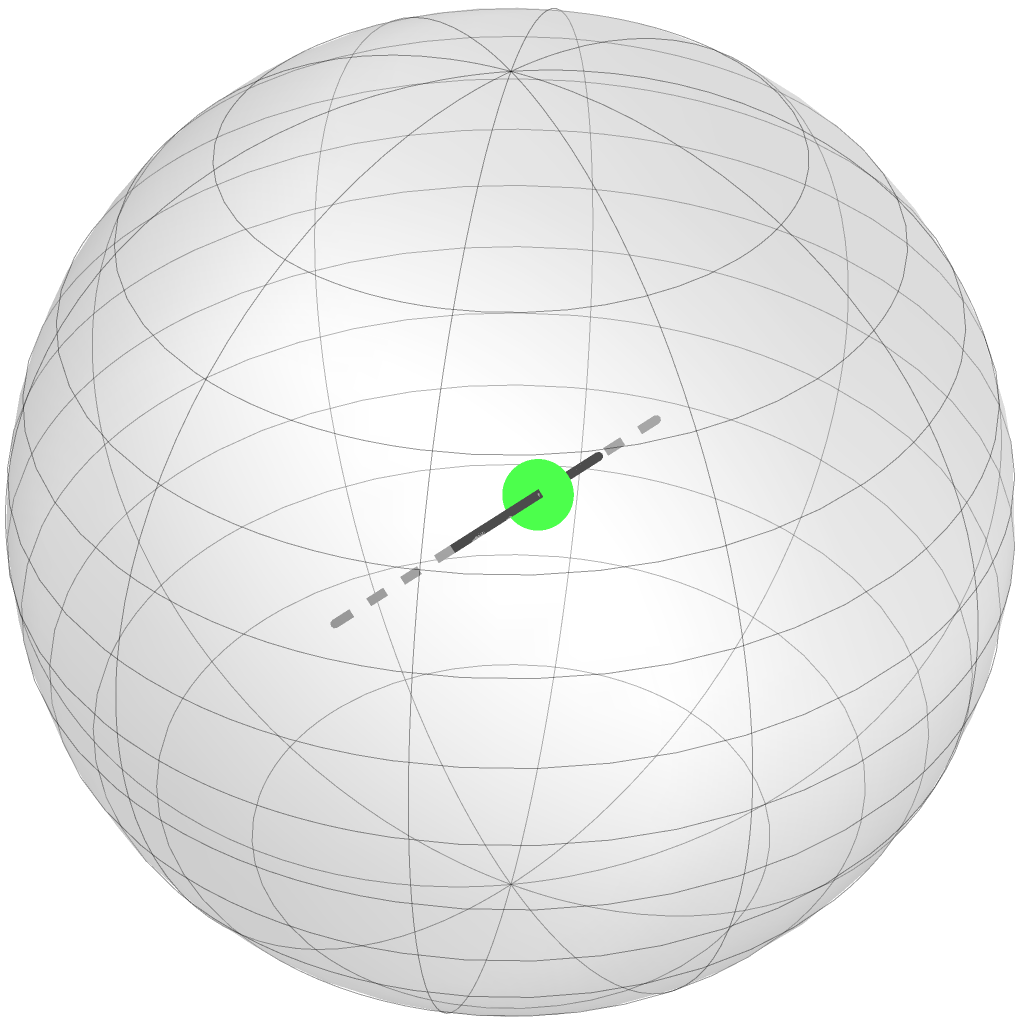}
\includegraphics[width=3cm]{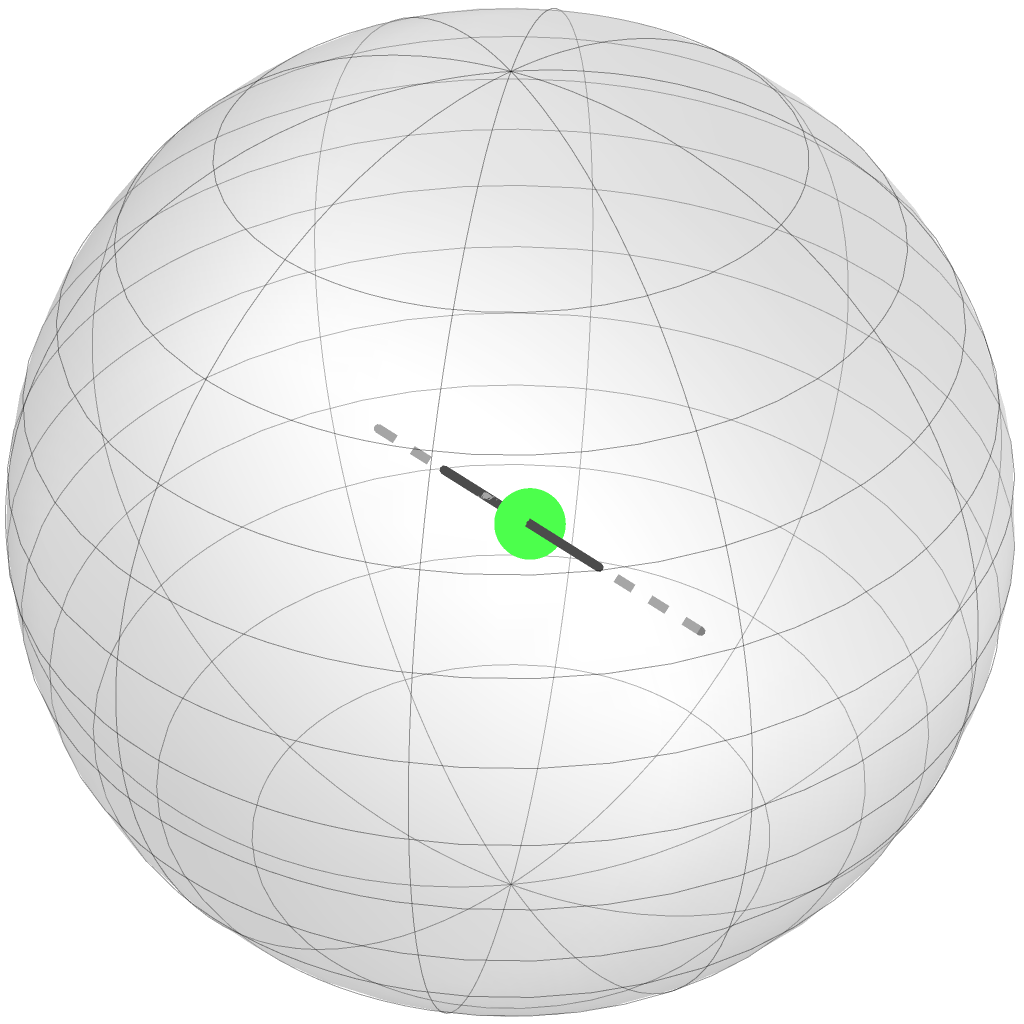}
\caption{A doubly zero discord state, where both $\E_A$ and $\E_B$ are radial line segments.}
\label{needle}
\end{figure}
\begin{figure}[h]
\includegraphics[width=3cm]{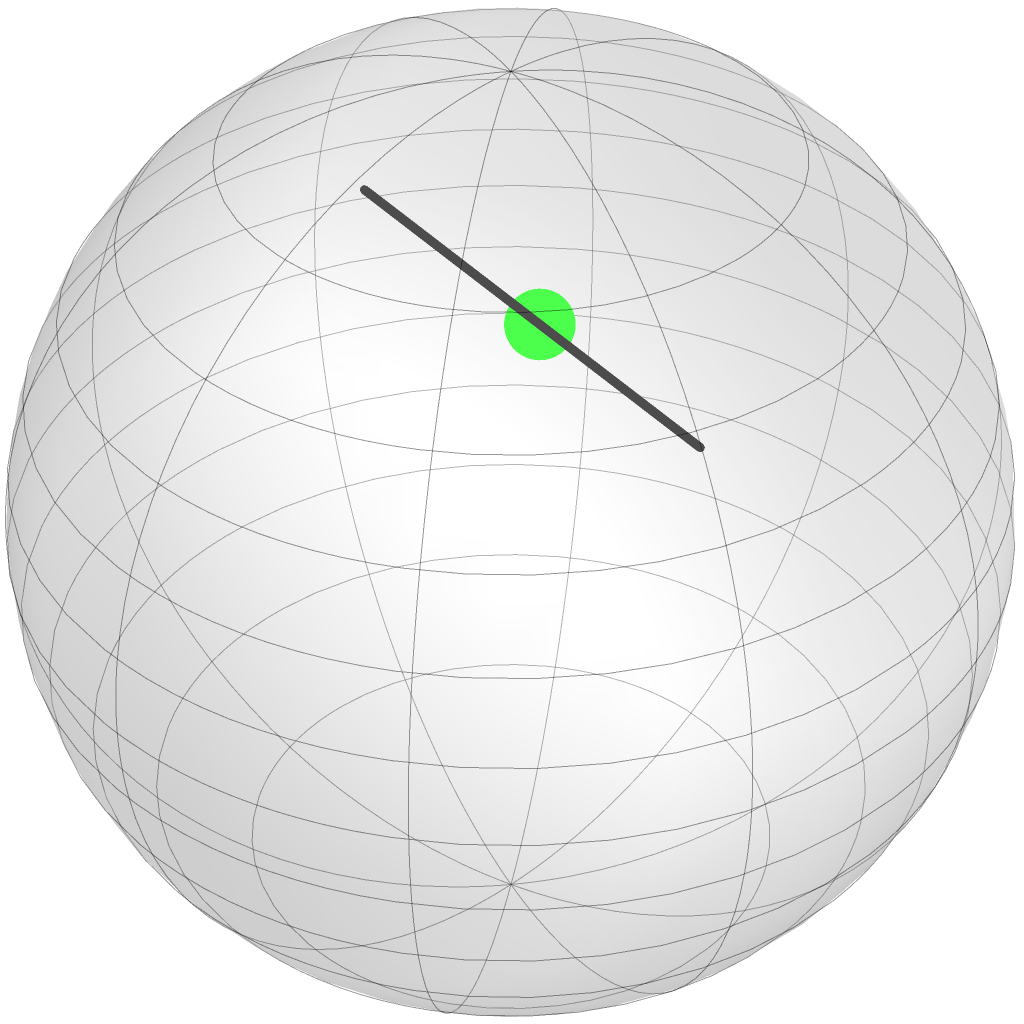}
\includegraphics[width=3cm]{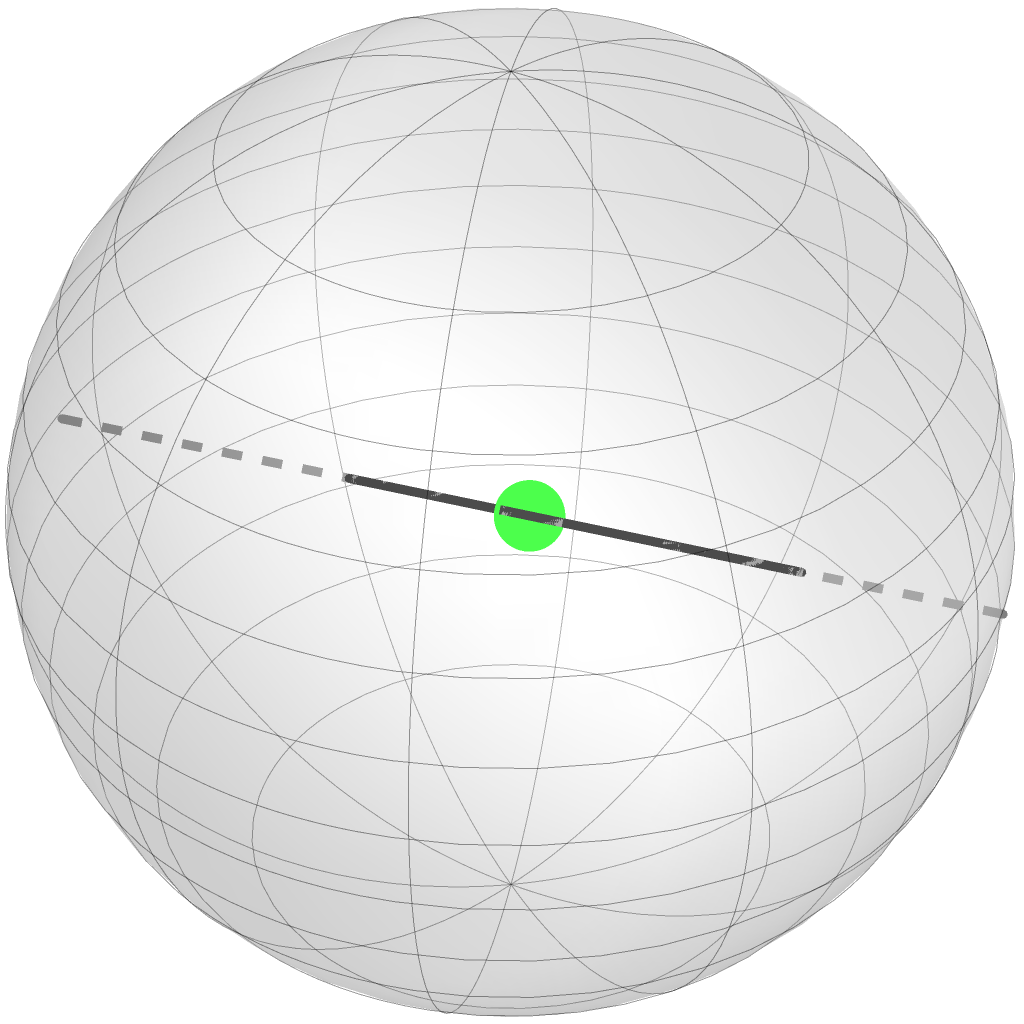}
\caption{A state with zero discord at $B$, but \emph{non-zero} discord at $A$. We have $\E_B$ being a radial line segment, but $\E_A$ is not a radial line segment.}
\label{needle2}
\end{figure}

\section{Beyond two qubits: Extensions to higher dimensional quantum systems}\label{sec_extend}
It is relatively straightforward to outline which aspects of our work naturally generalise beyond the two qubit case, and also where new directions lie. There are many interesting open questions in this regard, and the two simplest scenarios to consider are (a) when Alice holds a $d>2$ dimensional quantum system, and (b) when we have a system of $N$ qubits distributed in space and in some large entangled multipartite state. We briefly consider each case in turn.
\subsection{$d \times 2$ systems}
If Alice has a $d$-dimensional system (with $d>2$) while Bob possesses a qubit, then the set of states that Bob can steer Alice to will still form an ellipsoid. This can be seen since the derivation in section \ref{sec_SteerEll_Deriv} easily extends to accommodate this scenario. However, now the ellipsoid is embedded in the space of qu$d$it ``Bloch'' vectors at Alice's side, which has $d^2 - 1$ real dimensions. This is a much more intricate structure than the Bloch sphere, see for example \cite{qutrit}. Our results fall into three categories.

Firstly, some of our results and their proofs carry over straightforwardly to this setting: the embedded ellipsoid and local Bloch vectors will still suffice as a faithful representation of the bipartite density matrix as we are still able to reconstruct the state up to a choice of basis for Bob. Furthermore, fitting inside a tetrahedron inside the qu$d$it state space will still suffice for the state to be separable.

Secondly, some of our results carry over with only slight modification, for example states that have zero discord for Alice would correspond to ellipses inside a triangle whose vertices are orthogonal states for $d=3$, and ellipsoids inside a tetrahedron with vertices on orthogonal states for $d=4$.

For some of our results, the third category, the situation is unclear and it is here that perhaps the most interesting features are to be found. For instance the necessity of fitting inside a tetrahedron for separability followed simply from the fact that any two-qubit separable state can be written using at most four product states \cite{Sanpera}. However,for the case of $3 \times 2$ dimensional separable states, it is known that this can require as many as 6 \emph{pure} product states \cite{qubitqudit}, to the best of our knowledge it is an open question how few product states are required if mixed states can be used. This question has an interesting geometric aspect as motivated from the steering ellipsoid perspective.

In addition, several of our results are based on the PPT criterion, and so may extend to $d=3$, where the criterion is still necessary and sufficient, and may admit application to the set of PPT states for $d>3$, where it is now a strict superset of the separable states.

\subsection{$2 \times 2 \times 2 \dotsb 2$ systems}
Another natural extension of our work is to a multipartite system composed of entangled qubits. At the simplest level, given such a multipartite qubit state, we can consider the steering ellipsoid for any party to steer any other, however due to monogamy of entanglement there are non-trivial constraints amongst the geometric data for any pair of qubits. In some forth-coming work, this has been made precise in terms of  ``monogamy of steering'' relations \cite{Antony}, which in turn have provided a simple derivation of the Coffman-Kundu-Wootters inequality,  and elegant geometric connections between concurrence and maximal volume ellipsoids for a fixed center. 

Another direction of interest is motivated by a theorem of Petz, which states that a tripartite quantum state $\rho_{ABC}$, for which the conditional mutual information obeys $I(A:C|B) = 0$, can be decomposed as $\rho_{ABC} = \rho_{BC}^{1/2} \rho_B ^{-1/2} \rho_{AB} \rho_B^{-1/2}\rho_{BC}^{1/2}$. Note that the middle term on the right-hand side is precisely the canonical state $\tilde{\rho}_{AB}$, and so it would be of interest to formulate this relation in terms of ellipsoids, especially since the Petz relation admits generalization to the decomposition of multipartite systems in terms of bipartite entanglement. Such an analysis in terms of the geometry of steering could potentially be a valuable tool in the study of entanglement in multipartite mixed states. 

More broadly, our work on the relations between the various geometric data may also be an intuitive way to study the compatibility of bipartite reduced states with an overall multipartite quantum state (an instance of the ``quantum marginal problem'').
\end{document}